\theoremstyle{definition}
\newtheorem{definition}{Definition}
\theoremstyle{plain}
\newtheorem{lemma}{Lemma}
\newcommand{\ru}{\textsf{Calyx}\xspace}
\begin{document}
\title{\Large \bf \ru: Privacy-Preserving
Multi-Token Optimistic-Rollup Protocol\\
}


\author{
{\rm Dominik Apel}\\
Common Prefix
\and
{\rm Zeta Avarikioti}\\
TU Wien \& Common Prefix
\and
{\rm Matteo Maffei}\\
TU Wien
\and{\rm Yuheng Wang}\\
TU Wien
} 

\newcommand{\mm}[1]{{\color{magenta} \textbf{Matteo:} #1}}

\maketitle

\begin{abstract}

Rollup protocols have recently received significant attention as a promising class of Layer 2 (L2) scalability solutions. By utilizing the Layer 1 (L1) blockchain solely as a bulletin board for a summary of the executed transactions and state changes, rollups enable secure off-chain execution while avoiding the complexity of other L2 mechanisms. However, to ensure data availability, current rollup protocols require the plaintext of executed transactions to be published on-chain, resulting in inherent privacy limitations. 

In this paper, we address this problem by introducing \ru, the first privacy-preserving multi-token optimistic-Rollup protocol. \ru guarantees full payment privacy for all L2 transactions, revealing no information about the sender, recipient, transferred amount, or token type. The protocol further supports atomic execution of multiple multi-token transactions and introduces a transaction fee scheme to enable broader application scenarios while ensuring the sustainable operation of the protocol. To enforce correctness, \ru adopts an efficient one-step fraud-proof mechanism. We analyze the security and privacy guarantees of the protocol and provide an implementation and evaluation. Our results show that executing a single transaction costs approximately \$0.06 (0.00002 ETH) and incurs only constant-size on-chain cost in asymptotic terms.


\end{abstract}

\section{Introduction}
\label{sec:intro}

Due to the decentralized and security-preserving properties, blockchain platforms such as Ethereum have attracted significant attention in recent years as a foundation for digital transactions and payments. As of the end of April 2025, Ether (ETH), the native token of Ethereum, had a market capitalization of approximately 211 billion \$~\cite{coinmarketcap}. Additionally, Ethereum also supports the issuance and trading of non-native tokens conforming to the ERC-20 standard~\cite{eip20}, which are widely used for decentralized applications (dApps), decentralized finance (DeFi), and decentralized autonomous organizations (DAOs) within the Ethereum ecosystem. To allow every participant to verify transaction validity and maintain a consistent view of the ledger, blockchain systems such as Ethereum publish all transactions in plaintext. This public verifiability, however, comes at a price, introducing three challenges that hinder broader adoption: privacy, scalability, and on-chain cost.

\textbf{Privacy challenge.} In blockchains such as Ethereum, plaintext
transactions expose the sender, recipient, and transferred value,
allowing on-chain activity to be linked to identities through address
clustering~\cite{victor2020address}. This linkability leaves users in
a permissionless network vulnerable to attacks such as
censorship~\cite{wahrstatter2024blockchain} and
front-running~\cite{torres2021frontrunner}. Privacy-preserving
cryptocurrencies such as Zcash~\cite{Zcash} and
Monero~\cite{alonso2020zero} mitigate linkability through
cryptographic techniques, but they operate as standalone chains with
their own consensus, and thus do not inherit Ethereum's security
guarantees. Smart-contract-based L1 solutions on
Ethereum~\cite{Zether,belling2019zeth,Tornado} avoid this drawback,
but like Zcash and Monero, they remain bound by L1 throughput and
therefore inherit the scalability challenge.

\textbf{Scalability challenge.} Because of the requirement that every
maintainer re-execute all transactions, Ethereum achieves a throughput
of only 15--30 transactions per second
(TPS)~\cite{astar2023layer2}, whereas Visa can process up to
60{,}000~TPS~\cite{VisaTXPS}. L2 protocols have emerged as a promising
direction for overcoming this limitation: they offload transaction
execution from the main chain and rely on efficient on-chain
mechanisms to verify execution correctness~\cite{gudgeon2020sok}.
Among them, Rollup protocols have received substantial attention. In a
Rollup, execution is performed solely by a designated operator (or
sequencer), and only the resulting state transition is periodically
published to the L1 blockchain as a checkpoint~\cite{arbNitro}. Unlike
sidechains, which require additional trust assumptions on their
maintainers~\cite{nick2020liquid}, Rollups thereby inherit the
security guarantees of the underlying blockchain. However, most Rollup
designs post the executed transactions on-chain in plaintext to
realize \emph{data availability}, and consequently offer no privacy
guarantees.

\textbf{On-chain cost challenge.} Reconciling Rollups with privacy
raises a third challenge: the cost of convincing the L1 of correct
execution. Among the two Rollup families, only
zk-Rollups~\cite{zksync2020} have been extended to protect transaction
privacy: protocols such as Aztec~\cite{aztec} adopt zk-SNARK
aggregation schemes~\cite{rondelet2020zecale} to conceal transaction
contents while attesting to correct execution. However, a zk-Rollup
inherently requires the on-chain smart contract to verify a validity
proof for every batch, regardless of whether any misbehavior
occurs, and on-chain proof verification is among the most expensive
operations on Ethereum. This unconditional, recurring verification
cost is fundamental to the validity-proof model. Optimistic-Rollups
such as Arbitrum~\cite{arbNitro} avoid it by verifying evidence
on-chain only when a published batch is disputed, so that the common,
honest case incurs no verification cost at all; yet no existing
optimistic-Rollup provides transaction privacy.

Given the limitations of existing solutions, a natural question
arises: \emph{Is it possible to design a privacy-preserving and
scalable payment scheme that supports multi-token transactions using
Rollup protocols, while avoiding on-chain verification cost in the
common case?}

\subsection{Related work}

\textbf{Privacy-preserving payments.} Achieving privacy while preserving the security guarantees of blockchain systems is a central research challenge and an active area of development. The mainstream approach relies on cryptographic techniques that enable the verification of transaction execution without revealing sensitive information. For example, Zerocoin~\cite{zerocoin} and Zerocash~\cite{Zerocash} use zero-knowledge proof systems to validate transactions while hiding private information, and Monero~\cite{alonso2020zero} uses ring signatures~\cite{RingConfidentialTransactions} to protect linkability. However, these systems are tailored to their own blockchains. Such a chain could serve as a sidechain bridged to Ethereum, but this alters the security assumptions, since maintaining the sidechain's security requires additional trust assumptions of its own. In Ethereum, protocols such as ZETH~\cite{belling2019zeth} use a transaction-mixing smart contract to protect linkability, while Tornado Cash~\cite{Tornado} and Zether~\cite{Zether} employ transaction encryption and zero-knowledge proofs to enable privacy-preserving transfers. These solutions enhance transaction privacy while inheriting Ethereum's native security guarantees; however, they operate entirely at the L1 level, so the inherent throughput limitations remain.

\textbf{Rollup protocols.} Based on their verification mechanism, Rollup protocols are typically categorized into two types: zk-Rollups, such as zkSync~\cite{zksync2020}, and optimistic-Rollups, such as Arbitrum~\cite{arbNitro} and Optimism~\cite{optimism2021}. As noted in prior work~\cite{rondelet2020zecale,torralba23unmasking}, zk-Rollups leverage zk-SNARK proofs mainly to realize efficient and secure state transitions~\cite{polygon2024eip4844,scroll2024blobs}, yet the executed transaction data must still be published on the L1 blockchain to ensure data availability; consequently, their privacy guarantees remain limited in practice. Recently, zk-SNARK aggregation schemes designed for privacy-preserving transactions~\cite{rondelet2020zecale} have begun to be adopted by novel zk-Rollup protocols such as Aztec~\cite{aztec}, aiming to both protect transaction privacy and reduce the on-chain verification burden. However, a zk-Rollup inherently requires the on-chain smart contract to verify a validity proof for every batch, regardless of whether the execution is correct; the on-chain verification cost is therefore incurred unconditionally and remains a fundamental limitation.

\subsection{Contribution}

In this paper, we introduce \ru, the first privacy-preserving
multi-token optimistic-Rollup protocol, which addresses the problems
above through three corresponding design choices. For the scalability
challenge, \ru follows the Rollup paradigm: transaction execution is
delegated to an L2 operator, and only succinct state updates are
published on the L1 blockchain. For the privacy challenge, \ru adopts
a UTXO-based transaction model and employs zk-SNARKs to protect
off-chain L2 transactions, revealing no information about the sender,
receiver, transferred amount, or token type. Finally, to avoid the
unconditional on-chain verification cost inherent to zk-Rollups, \ru
enforces correctness in an optimistic manner: published batches are
presumed valid, and proofs are verified on-chain only when a verifier
raises a dispute. This optimistic model is made practical by an
efficient \emph{one-step} fraud-proof mechanism, in which any
misbehavior can be refuted in a single round by revealing a constant
number of words from the published BLOB data, data that itself leaks
no private information, so that verifiability never comes at the
expense of privacy.

Conclusively, the contribution of this paper can be summarized as follows:
\begin{itemize}
    \item We propose \ru, the first privacy-preserving multi-token optimistic-Rollup protocol. \ru adopts a UTXO-based transaction model and employs zk-SNARKs to enable privacy-preserving L2 transactions. To support efficient and privacy-preserving checkpointing and verification on L1 blockchain, we introduce \emph{pointer scheme} for BLOB data and a one-step fraud-proof mechanism based on such design (Section~\ref{sec:design}).

    \item We formalize the security and privacy properties that \ru aims to achieve, and analyze its guarantees under the honest-but-curious/Byzantine model (Section~\ref{sec:analysis}).

    \item We implement a proof-of-concept prototype of \ru on the Ethereum blockchain and evaluate the associated costs, including contract deployment and transaction publishment. An asymptotic comparison with the other two Rollup protocols, Aztec and Arbitrum, is also conducted. Our evaluation shows that \ru can be deployed efficiently, with gas costs that make it a practical and feasible solution (Section~\ref{sec:implementation}).
\end{itemize}

\section{Preliminaries}

\subsection{Optimistic-Rollup}

An optimistic-Rollup protocol enables a group of clients to conduct
off-chain transactions after joining the protocol. To join, a client
deposits a specified amount of tokens into the Rollup's smart contract
deployed on the L1 blockchain; once the deposit transaction is
confirmed on-chain, a corresponding minting transaction that
initializes the client's state within the Rollup is considered valid.
Afterwards, clients submit transactions directly to the operator and
wait for execution. The operator aggregates the received valid
transactions into batches, executes them off-chain, and periodically
publishes a checkpoint on the L1 blockchain. The executed transactions
and their results are stored in a blockchain BLOB~\cite{eip4844}, whose contents are
not re-executed or verified by the L1 validators running consensus;
only a commitment to the BLOB data is recorded on-chain. To ensure
correctness, the optimistic-Rollup protocol instead relies on external
verifiers to validate the operator's published data. If a verifier
detects misbehavior by the operator, i.e., an incorrect state update,
it may submit a fraud-proof; upon successful verification of the
fraud-proof, the invalid state update is rejected. After completing
all intended off-chain transactions, a client can leave the protocol
by submitting a leave request to the operator. The operator then
executes a burning transaction reflecting the client's final state and
publishes it on the L1 blockchain. Once the burning transaction is
confirmed on-chain, the client can withdraw its remaining funds from
the Rollup smart contract, completing the exit process.

\subsection{zk-SNARK}

A zk-SNARK (\underline{z}ero-\underline{k}nowledge \underline{s}uccinct \underline{n}on-interactive \underline{ar}gument of \underline{k}nowledge)~\cite{nizkp} proof system is a \textbf{proof system} for a relation $R$, meaning that a Prover $P$ can convince a Verifier $V$ that an instance/witness pair $(x, w)$ belongs to the relation $R$. As a proof system, it satisfies the following two fundamental security properties:

\textbf{Completeness:} For any $(x, w) \in R$, the honest Prover $P$ can convince the Verifier $V$ that $(x, w) \in R$ with a proof $\pi$.

\textbf{Computational soundness:} It is computationally infeasible for any malicious Prover $P$ to convince the Verifier $V$ that some $(x', w') \notin R$ is in $R$.

In addition to these, zk-SNARKs further provide the following properties:

\textbf{Non-interactivity:} The proof is generated and sent from $P$ to $V$ without any interaction beyond this one message.

\textbf{Succinctness:} The proof $\pi$ is of constant size and can be verified in $O(1)$ time, independent of the complexity of the relation $R$.

\textbf{Zero knowledge:} The proof leaks no information about the witness $w$; the Verifier learns only that $(x, w) \in R$ for some valid $w$.

\subsection{Merkle Tree}
\label{sec:tree}
A Merkle tree~\cite{merkleTree} uses a collision-resistant hash function
$h$ to build a data structure supporting efficient membership proofs. It
is a binary tree of depth $D$ whose $2^D$ leaves store data elements; each
internal node stores the hash $h(v_l \| v_r)$ of the values $v_l, v_r$ of
its two children. The root thus acts as a commitment to all stored data,
and the membership of a data item in a leaf can be proven with an
inclusion proof of size $O(D)$. In our paper, we adopt a variant called an \emph{append-only
Merkle tree}: all leaf values are initially set to $0$, elements are
inserted into the leftmost empty leaf, and the hash function $h$ is
replaced by
\begin{equation*}
  h^*(x) =
    \begin{cases}
      0 & \text{if $x = 0\|0$}\\
      h(x) & \text{otherwise,}
    \end{cases}
\end{equation*}
which remains collision-resistant if $h$ is. Under $h^*$, every empty
subtree has value $0$ at every level, so the tree can be maintained
without storing the hashes of empty regions.

We define the \emph{front} of an append-only Merkle tree containing $n$
elements as the pair $(n, \pi)$, where $\pi$ is the inclusion proof of
the most recently inserted leaf. The front has two properties that are
central to our fraud-proof design (Section~\ref{sec:design}). First, the
front alone suffices to append further elements and compute the resulting
roots: every sibling value needed to update the path to the root is
either contained in the front or is the root of an empty subtree, which
equals $0$ by construction. Hence a party holding only this $O(D)$-size
state, and not the leaves, can correctly replay state updates; we call
such a tree a \emph{lightweight tree}. Second, given only a root $r$, one
can verify that a claimed front $(n, \pi)$ belongs to the tree with root
$r$ by checking the inclusion proof against $r$ and checking that the
proof word at the first zero bit of the binary representation of $n$
equals $0$, i.e., that the region beyond the last insertion is indeed
empty. Together, these properties allow the on-chain Judge contract to
verify a claimed intermediate tree state and recompute the updated root
from a constant-size witness, without access to the full tree.

\section{Model}
\label{sec:model}

\textbf{System Model and Assumptions.} There are three types of participants in \ru: (1) \emph{clients},
representing the users of the protocol; (2) \emph{verifiers}, who continuously monitor the blockchain to validate
the operator's published state and detect misbehavior, any client
may act as a verifier, but we separate the two roles for clarity of
presentation; and (3) the
\emph{operator}, who verifies client transactions and publishes
transaction batches to the L1 blockchain. \ru supports multiple
operators; each operator must lock a stake in the on-chain Judge
contract, which is slashed upon a successful fraud-proof and awarded
to the disputing verifier. We assume that all participants are computationally bounded and that
standard cryptographic primitives, including secure communication
channels, digital signatures, cryptographic hash functions, encryption
schemes, and zk-SNARKs, are correctly instantiated and secure.
Communication between clients and the operator is asynchronous, i.e.,
every message is eventually delivered. All participants have
uncensored access to the L1 blockchain, with a known upper bound on
the delay for reading from and writing to an idealized secure
blockchain realizing \emph{safety} and
\emph{liveness}~\cite{graf2021security}.

\textbf{Threat Model.} We analyze the security of \ru under the honest-but-curious/Byzantine
model. Participants are classified into two types. \emph{Honest}
participants follow the protocol correctly but are honest-but-curious,
meaning they may attempt to infer private information about others by
analyzing any public data leaked during protocol execution.
\emph{Byzantine} participants may behave arbitrarily and deviate from
the protocol in an adversarial manner, while also attempting to
reconstruct private information held by honest parties. We assume the
presence of at least one honest
verifier who monitors protocol execution and responds in a timely
manner.

\textbf{Protocol Goals.} Under the Byzantine security model, we first consider \textit{balance security} in order to capture the fundamental requirement that any honest participant must not lose any tokens they are entitled to after leaving the protocol, regardless of the behavior of other parties. We define this property formally as follows:

\begin{definition}[Balance security]
\label{def:security}
Any participant honestly following \ru does not lose coins.
\end{definition}

A protocol that does not execute any transactions can trivially satisfy \textit{balance security}; however, such a protocol is functionally meaningless. To ensure that the protocol makes progress in response to valid requests from honest participants, we define the notion of \textit{liveness} as follows:

\begin{definition}[Liveness]
\label{def:liveness}
Any valid request from participants honestly following \ru will eventually either be committed on-chain or invalidated.
\end{definition}

In addition to \textit{balance security} and \textit{liveness}, which ensure that the protocol functions correctly in response to actions taken by honest participants, we define the privacy property that \ru aims to achieve as \textit{L2 transaction privacy}, meaning that only indistinguishable public leakage about L2 execution is observable through the published data on L1 blockchain.
Our \emph{L2 transaction privacy} property is defined based on the Layer 2 Indistinguishability (L2-IND) game inspired by~\cite{Zcash}. The formal definition of the game and how the oracle works can be found in Appendix~\ref{apx:privacy}.

\begin{definition}[L2 transaction privacy]
A rollup protocol realizes \emph{L2 transaction privacy} if for any polynomial-time (PPT) adversary $\mathcal{A}$, the probability of winning the L2-IND game is only negligibly greater than 1/2.
\end{definition}

\begin{figure*}[t]
    \centering
    \includegraphics[width=\textwidth]{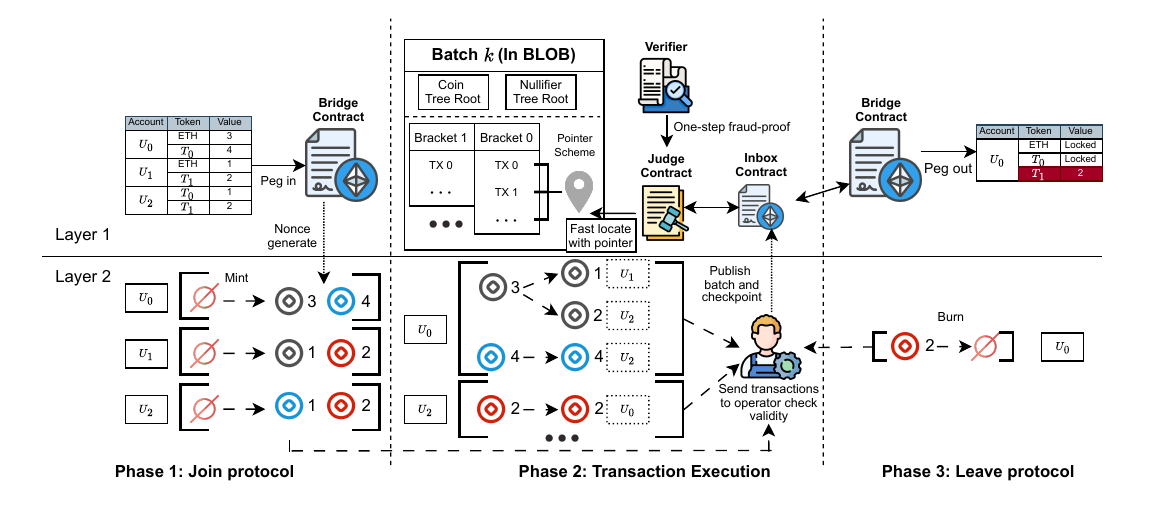}
    \caption{\ru Overview. Consider three clients $U_0, U_1, U_2$ joining the protocol with three types of tokens (ETH, $T_0$, $T_1$). Then $U_0$ privately pays $U_1$ and $U_2$, while $U_2$ pays $U_0$. Transactions grouped in a bracket are executed atomically. Execution occurs after the transactions are published on-chain and no fraud-proof is submitted by a verifier within a specified time window. Finally, $U_0$ retrieves the $2$ tokens $T_1$ from $U_2$ through the contract.}
    \label{fig:overview}
\end{figure*}

\section{Protocol Overview}

Before presenting the detailed design of \ru, we first provide a high-level overview of the protocol. Figure~\ref{fig:overview} illustrates the structure of \ru. We then describe the iterative development process, starting from a baseline construction of a standard optimistic-Rollup protocol.

\textbf{Naive design.} In a standard optimistic-Rollup, after joining the protocol clients then propose plaintext transactions in the form of regular account-based transaction ``$A$ sends $x$ coins to $B$". These transactions are submitted to an operator, published on the L1 blockchain, verified by verifiers, and finally committed. Clients can later exit the protocol with their latest state.

While this approach achieves \emph{balance security} and
\emph{liveness}, it fails to provide any privacy guarantees:
transaction details are leaked through the batches published by the
operator, whose disclosure is precisely what enables verifiers to
check correctness and uphold balance security.

\textbf{L2 with UTXO model.} Since publishing transactions on the L1 blockchain is essential for ensuring the security and verifiability of the optimistic-Rollup protocol, the main challenge is to minimize the information leakage from these published transactions. The UTXO model discloses only the linkability between inputs and outputs. Therefore, \ru adopts the UTXO model for transactions within the L2 system. Furthermore, each transaction includes a token type parameter and a transaction fee, enabling \ru to support multi-token transactions while ensuring the sustainable operation of the protocol. Finally, a corresponding mint and burn UTXO transaction is designed to interact with the L1 blockchain for client joining and leaving.

However, the validity of UTXO transaction is usually achieved with digital signatures. Nonetheless, such signatures still reveal linkability through the associated public keys used for verification, thereby leaking metadata that can be exploited by adversaries.

\textbf{Zero-Knowledge verification.} To protect linkability while ensuring that only the coin owner can spend it, we first hide the source of the input coin using a commitment, and then adopt a zk-SNARK proof system to prove transaction validity. The key distinction in applying this idea to an L2 protocol is that, unlike L1 protocols that achieve consensus directly among all participants, the optimistic-Rollup relies on the L1 blockchain as the consensus layer. Consequently, the public inputs required for verifying zk-SNARK proofs must be published on the L1 blockchain by the operator along with the periodic checkpoints. Importantly, publishing these inputs and outputs on L1 does not undermine privacy. \ru is UTXO-based and maintains no account-balance state, so no user balances are ever written to L1; the smart contract persistently stores only the Merkle roots of the Coin and Nullifier trees as the committed L2 state. These roots are not auxiliary secrets, but a deterministic digest of the privacy-preserving transactions in the batch BLOB, from whose inputs and outputs they can be fully reconstructed. The transaction data itself consists solely of coin commitments, serial numbers (nullifiers), and zk-SNARK proofs, each a hiding commitment or a pseudorandom value, and therefore reveals no sender, recipient, transferred amount, or token type. Since the roots are merely a digest of this already-hiding data, publishing them leaks nothing further. We formally capture the indistinguishability of this public leakage through our privacy definition.

Considering the high transaction throughput that a Rollup protocol is expected to handle, a new challenge arises: how to efficiently verify the data published by the operator on the L1 blockchain. This includes not only ensuring correctness and availability but also enabling timely verification to support latency-sensitive applications.

\textbf{Efficient One-step fraud-proof.} To address the computational overhead and verification delays introduced by directly combining the optimistic-Rollup framework with a ZK-proof system, this paper proposes an efficient one-step fraud-proof scheme.

Concretely, instead of relying on multiple rounds of interactive verification to locate a faulty execution within the entire published execution history in the BLOB, as is common in standard optimistic-Rollup protocols, our scheme employs a \emph{pointer} mechanism. The fraud-proof transaction includes a pointer that directly identifies the erroneous location in the operator's published BLOB data. This design eliminates the need for the smart contract to conduct extensive on-chain searches over large datasets, thereby reducing computational overhead, gas costs, and overall latency.

\section{Protocol Design}
\label{sec:design}

In this section, we present the detailed design of the \ru protocol. The protocol behavior is described from the perspective of a client's life cycle, consisting of three phases: \emph{protocol joining}, \emph{transaction execution}, and \emph{protocol leaving}. We assume that the on-chain smart contract for the protocol has been deployed in advance and is accessible to all participants. The precise structure of the contract and the interaction semantics will be formally defined in conjunction with the explanation of each phase of the protocol.


For a better understanding of our protocol design, we first list the notations used in the protocol construction, as shown in the following table~\ref {tab:notation}.

\begin{table}[!h]
\caption{Notation explanation}
\label{tab:notation}
\resizebox{\columnwidth}{!}{%
\begin{tabular}{|c|c|}
\hline
\textbf{Notation}      & \textbf{Explanation}                        \\ \hline
$h$                    & hash function                               \\ \hline
$Enc$                  & Encryption function                         \\ \hline
\begin{tabular}[c]{@{}c@{}}$sk_{coin}$\\ $pk_{coin}=h(sk_{coin})$\end{tabular} &
  \begin{tabular}[c]{@{}c@{}}secret key and public key to spend a coin\\ generated with hash function\end{tabular} \\ \hline
$sk_{Enc}$, $pk_{Enc}$ & secret key and public key for encryption    \\ \hline
$p$                    & random identifier                           \\ \hline
$token$                & token type                                  \\ \hline
$value$                & transaction value                           \\ \hline
$sn$                   & coin serial number                          \\ \hline
$crt$                  & coin Merkle tree root                       \\ \hline
$id_{L1}$              & L1 address                                  \\ \hline
$pk_{sig}$             & public key for signing                      \\ \hline
$fee$                  & L2 transaction fee provided by participants \\ \hline
\end{tabular}%
}
\end{table}

\subsection{Joining Protocol} 

As a prerequisite for the subsequent procedures, a client can join \ru in two steps: (1) deposit coins on the L1 blockchain; (2) the initial L2 state corresponding to the client is created.

For the first step, a client sends a denomination $value$ of some ERC-20 compliant token type $token$ (including the native token ETH) as initial asset and the corresponding public key used for creating signatures in the following procedures $pk_{\mathsf{sig}}$ to the on-chain smart contract.
The contract will lock these deposit assets, record the corresponding information on-chain, and reply with a fresh nonce value $n$, which can be used as evidence for creating the L2 initial state in the second step. The on-chain smart contract for the joining process is detailed in Algorithm~\ref{alg:bridge_l1_l2}. Note that in the following, multiple contract codes are presented. In practice, they can either be deployed separately or integrated with each other.

For the second step, the initial state of \ru for a client is created by generating and executing a special UTXO transaction called a minting transaction, mint. Since it initializes the off-chain state for the client, the minting transaction has no inputs. Its output consists of a single value
$c = h(token || value || fee || k)$, where $k=h(p||pk_{coin})$ is the compressed client identity and $p$ is the randomly generated coin identifier, and $pk_{coin}$ represents the owner of the coin who knows the preimage $sk_{coin}$.
The body of the transaction includes two critical pieces of information for identifying and verifying its validity: the mint nonce $n$ obtained from the previous on-chain deposit step, and the compressed identity $k$. To ensure \emph{economic sustainability} and guarantee the rational security of the protocol, we introduce a transaction fee scheme: the $fee$ in the coin commitment denotes the L2 transaction fee paid to the operator as compensation for the gas costs of publishing transactions on the L1 blockchain. These fees are included in the collateral deposit beforehand. 

Once a valid minting transaction is constructed by the client and executed according to the procedure described later, the client is considered to have successfully joined \ru. Note that the validity of a mint transaction is not verified using zk-SNARK proof; instead, it is checked directly against the records maintained in the smart contract. 

\begin{algorithm}[h]
\caption{L1 to L2 Bridge (Contract 1)}\label{alg:bridge_l1_l2}

Initialize nonce value $curNonce = 0$\;
Record nonce mapping $nonces = \mathbb{N} \to (\mathbb{N}, \mathcal{T}, \mathcal{V}, \mathcal{G}, \mathcal{P})$
\\ \Comment{(public key, token type, token value, L2 transaction fee, included block number)}
\Fn{$toL2(v, t, pk_{sig}$)}{
    $g = msg.value$\;
    \If{$t.allowance(id_{L1}, this) < v$}{
        \Revert{}
    }
    $t.transferFrom(id_{L1}, this, v)$\;
    $curNonce \gets curNonce + 1$\;
    $nonces[curNonce] \gets (pk_{sig}, t, v, g, block.number)$\;
    \Return{$n:=curNonce$}\;
}

\Fn{$feeToL2(pk_{sig})$}{
    $g = msg.value$\;
    $curNonce \gets curNonce + 1$\;
    $nonces[curNonce] \gets (pk_{sig}, 0, 0, g, block.number)$\;
    \Return{$curNonce$}\;
}

\Fn{$getMintData(n)$}{
    \If{$nonces[n] = \bot$}{
        \Return{$\bot$}\;
    }
    \Return{$nonces[n]$}\;
}
\end{algorithm}

\subsection{L2 Transaction Execution}

As an L2 protocol, \ru's primary role is to offload the execution of transactions to the off-chain environment. Here, we describe the transaction execution procedure from the perspective of a regular \emph{transfer transaction}, which also applies to special transactions such as the mint transaction described above.

As a UTXO-model transaction, a transfer transaction in \ru contains two lists of $M$ inputs and $M$ outputs (with unused elements left empty), all of the same token type. The output of the transaction is the coin commitment $c = h(token||value||fee||k=h(p||pk_{coin}))$ of the following information: (1) \textbf{token type} $token$; (2) \textbf{coin value} $value$ for the output coin; 
(3) \textbf{fee value} $fee$, measured in the native currency of the L1 chain;
(4) \textbf{identifier} $p$, which is generated randomly; (5) \textbf{coin owner public key} $pk_{coin}$. If the sender and receiver are not able to reach agreement themselves about the private value, all the information can be sent to the receiver through an encrypted message privately with the receiver's encryption key pair $C_{inf}=Enc_{pk^{receiver}_{Enc}}(token,value,fee,p)$ and attached to the coin commitment. 

And accordingly each input includes the privacy-preserving information of the coin it uses: 
\begin{itemize}
    \item The \textbf{coin tree root} $crt$ is the Merkle tree root that represents all coins ever created during \ru's lifecycle.
    \item The unique \textbf{serial number} $sn=h(p||sk_{coin})$ is unique for every coin that is only known to the true owner of the coin.
    \item The \textbf{signature public key} $pk_{sig}$ that belongs to the owner of the input coin who knows the corresponding secret key.
    \item The \textbf{coin commitment} $cm = h(token || value\allowbreak || fee\allowbreak || pk_{auth})$ to the coin's private properties, which includes: (1) \textbf{Signing Key Authorizer} $pk_{auth}=h(sk_{coin}||pk_{sig})$; (2) \textbf{Token Type} $token$; (3) \textbf{Coin Value} $value$ used for transaction; 
    (4) \textbf{Fee Value} $fee$.
\end{itemize}

To further support clients in transferring multiple types of tokens in different UTXO transactions at the same time. We further extend the idea of transaction into transaction brackets. Transactions are put into transaction brackets, acting as an "all-or-nothing" primitive, allowing the atomic execution of multiple transactions.
One of the use cases for such an atomic primitive is to swap different types of tokens. Furthermore, a hash of all the transactions' hash values is contained in the bracket to guarantee integrity. Finally, a transaction bracket must contain a signature from every public key contained in an input on the bracket hash, i.e., all coin owners must agree that their coins are spent in this specific way.



By leveraging a commitment scheme, the source of the spent coin is hidden, thereby protecting against linkability. However, this introduces a new challenge: how to ensure transaction correctness while keeping data hidden. Specifically, two correctness properties must be satisfied:
\emph{Input correctness} and \emph{Input-output relation correctness}.

\subsubsection{zk-SNARK Circuits}

In \ru, we leverage the zk-SNARK proof, which is included in the transfer transaction body, to enforce both properties mentioned above. In the following, we present the corresponding circuits that realize these two correctness goals.

\paragraph{Input proof.}Here we further define the \emph{input correctness} should realize the following goals: 
\begin{enumerate}
    \item There exist such outputs that are selected to spend from. This is done by privately providing a Merkle tree proof that asserts that the coin is a leaf in the tree with the provided root. The Merkle tree depth must be a fixed constant, called $D$.

    \item  The prover is the owner of the coin and knows the $sk_{coin}$.

    \item The provided serial number was calculated correctly.
    

    \item \emph{The owner approves of the provided signature verification key:}
    This is done by deriving a $pk_{auth}$ value from both the signature verification key and the secret key of the coin.
    This calculated value is private in order to achieve the hiding property of the provided coin commitment.

    \item \emph{The provided commitment was constructed correctly.}
    
\end{enumerate}

The inputs for the input proof's zk-SNARK circuit are as follows, with public values \underline{underlined}, all values of scalar type \textit{field} except otherwise noted, and $T[N]$ denoting an array type containing $N$ elements of type $T$:
\begin{flalign*}
&(\underline{crt},\; \underline{sn},\; \underline{cm},\; \underline{pk_{sig}},\; pk_{auth},\; c, \; bool[D]~dir,\; field[D]~path,\; r,\\
&\; token,\; value,\; fee,\; p,\; pk_{coin},\; sk_{coin}) &
\end{flalign*}

The circuit of the zk-SNARK can be described by Algorithm \ref{alg:zk-SNARK_coin_input}:

\begin{algorithm}
\caption{Input Proof (zk-SNARK Circuit 1)}\label{alg:zk-SNARK_coin_input}
\KwData{$\underline{crt} \in \mathbb{F}, \underline{sn} \in \mathbb{F}, \underline{cm} \in \mathbb{F}, \underline{pk_{sig}} \in \mathbb{F}, pk_{auth} \in \mathbb{F}, c \in \mathbb{F}, dir \in \mathbb{B}^D, path \in \mathbb{B}^D, token \in \mathbb{F}, value \in \mathbb{F}, fee \in \mathbb{F}, p \in \mathbb{F}, pk_{coin} \in \mathbb{F}, sk_{coin} \in \mathbb{F}$}\Comment{Finite field $\mathbb{F}$, Binary Set $\mathbb{B}$}
\Assert{$c = h(token || value || fee || h(p || pk_{coin}))$}\;
\Assert{$\underline{sn} = h(p || sk_{coin})$}\;
\Assert{$pk_{coin} = h(sk_{coin})$}\;
\Assert{$pk_{auth} = h(sk_{coin}||\underline{pk_{sig}})$}\; 
\Assert{$\underline{cm} = h(token || value || fee || pk_{auth})$}\;
$tmp \gets c$\;
\For{$i=1$ \emph{\KwTo} $D-1$}{
    \eIf{$dir[i]$}{
        $tmp \gets h(tmp || path[i])$\;
    }{
        $tmp \gets h(path[i] || tmp)$\;
    }
}
\Assert{$tmp = \underline{crt}$}\;
\end{algorithm}

\paragraph{Transaction proof.}We use a transaction proof to define the correct relationship between the transaction outputs and inputs. The transaction proof aims to guarantee the following goals: 

\begin{enumerate}
    \item The prover knows the secret preimage of the commitment values.

    \item All the inputs and outputs of a transfer transaction should belong to the same token type. The number of inputs and outputs is fixed to a constant number $M$.

    \item The sum of the input values should match the sum of the output values.

    \item In order to use the zk-SNARK scheme, we further require that the input and output are positive.
\end{enumerate}



The inputs are (public values are \underline{bold}):
\begin{flalign*}
&(field[M]~\underline{C},\; bool[M]~\underline{isConnected},\; bool[M]~\underline{isInput},\; field~\underline{fee}, &\\
&\; field[M]~token,\; field[M]~value,\; bool[M]~valueDecomp, &\\
&\; field[M]~fee, \; bool[M]~valueDecomp,\; field[M]~cmr) &
\end{flalign*}

Here, the two boolean vectors $isConnected$ and $isInput$ of size $M$ have the following semantics:
If $isConnected[k] = \text{False}$, then both the $k^{th}$ input and output are empty. Otherwise, $isInput[k]$ determines whether slot $k$ is connected to an input or an output. This affects the sign of the value and fee of the coin when computing the transaction balance: inputs add value to the balance, while outputs subtract value. The provided inputs and outputs for a transaction must follow the structure specified by $isConnected$ and $isInput$ for the transaction to be valid. Finally, the circuit checks that the balances sum to zero, ensuring that the total output values match the total input values. The zk-SNARK circuit can be described by Algorithm~\ref{alg:zk-SNARK_tx}.


\begin{algorithm}[!h]
\caption{Transaction Proof (zk-SNARK Circuit 2)}\label{alg:zk-SNARK_tx}
\KwData{$\underline{C} \in \mathbb{F}^M, \underline{isConnected} \in \mathbb{B}^M, \underline{isInput} \in \mathbb{B}^M, \underline{fee} \in \mathbb{F}, token \in \mathbb{F}^M, value \in \mathbb{F}^M, valueDecomp \in (\mathbb{B}^l)^M, fee \in \mathbb{F}^M, feeDecomp \in (\mathbb{B}^l)^M, cmr \in \mathbb{F}^M$} \Comment{Finite field $\mathbb{F}$, Binary Set $\mathbb{B}$}
$bal_{v} \gets 0$\;
$bal_{g} \gets 0$\;

\For{$i=0$ \KwTo $M-1$}{
    \If{$\underline{used}[i]$}{
        \Assert{$\underline{C}[i] = h(token[i] \parallel value[i] \parallel fee[i] \parallel cmr[i])$}\;
        \eIf{$\underline{isInput}[i]$}{ \tcp{This is an input}
            $bal_{v} \gets bal_{v} + value[i]$\;
            $bal_{g} \gets bal_{g} + fee[i]$\;
        }{ \tcp{This is an output}
            $bal_{v} \gets bal_{v} - value[i]$\;
            $bal_{g} \gets bal_{g} - fee[i]$\;

            \tcp{Check range}
            $value' \gets 0$\;
            $fee' \gets 0$\;
            \For{$j=0$ \KwTo $l-1$}{
                \If{$valueDecomp[i][j]$}{
                    $value' \gets value' + 2^j$\;
                }
                \If{$feeDecomp[i][j]$}{
                    $fee' \gets fee' + 2^j$\;
                }
            }
            \Assert{$value' = value[i]$}\;
            \Assert{$fee' = fee[i]$}\;
        }
    }

    \eIf{$token[i] = 0$}{
        \tcp{A fee token must not hold value}
        \Assert{$value[i] = 0$}\;
    }{
        \tcp{There can only be one non-universal token type}
        \Assert{$token[i] = token[0]$}\;
    }
}

\Assert{$bal_{g} \gets bal_{g} - \underline{\text{fee}}$}\;
\Assert{$bal_{g} = 0$}\;
\Assert{$bal_{v} = 0$}\;

\end{algorithm}

\subsubsection{Transaction execution with one-step Fraud-proof}

After receiving transactions from the client, the operator first verifies the proofs generated by the zk-SNARK scheme based on the circuit defined above. In order to execute these valid transactions, two steps to follow in \ru: (1) the operator needs to publish the valid transactions it wants to execute to the L1 blockchain, including the BLOB data and the smart contract stored data; (2) the published transactions are only considered to be executed after there exists no fraud-proof for a certain time period.
\vspace{-1em}
\paragraph{On-chain published data.}Once the operator has completed a batch of transaction brackets pending execution, it first generates a \emph{fee-collecting transaction} to collect all the fees included in the transactions and appends it to the end of the batch, This transaction contains an output with commitment $h(0||0||ck_f||k)$, where $ck_f$ is the transaction fee checkpoint at the beginning of the batch, and $k$ is a random value chosen by the operator and included in the body of the transaction for verification. Then a L1 transaction carrying this batch as BLOB data is then submitted to the Inbox contract on the L1 blockchain.

In addition to the data written to the BLOB, the transaction records metadata in the Inbox contract, including the commitment to the BLOB data (serving as the checkpoint of the L2 state) and information about the coins burned in this batch for clients leaving the protocol. Specifically, an array of elements of the form $(txbid, txid, t, v, g, id_{L1})$ is passed to the Inbox contract. Each element represents a claim that the transaction at index $txid$ within the transaction bracket at index $txbid$ is a burn transaction that burns $v$ tokens of type $t$ and $g$ units of ETH, while specifying $id$ as the receiver address. Algorithm \ref{alg:inbox} shows the working of the Inbox contract.

\begin{algorithm}
\caption{Inbox Contract (Contract 2)}\label{alg:inbox}
\KwData{Fraud-proof Period $fpp: \mathbb{N}$}
\Comment{Persistent storage}
$batchCommitments = \mathbb{N} \to (\mathcal{C}, \mathcal{I})$\;
$curHeight = \mathbb{N}$\;
$hnb = \mathbb{N}$; \Comment{highest non-final batch}
$blockFin = \mathbb{N} \to \mathbb{N}$\;
$burnData = \mathbb{N} \to (\mathbb{N}^2 \to \mathcal{B})$\;

\Fn{newBatch($blockBurnData:(\mathbb{N}^2\to\mathcal{B}), prevBatch: \mathcal{B}$)}{
    \Assert{$Judge.hasStaked(msg.sender)$}\;
    \Assert{$batchCommitments[curHeight][0] == prevBatch$}\;
    \While{$(blockFin[hnb] < block.number) \wedge (hnb \leq curHeight)$}{
        $blockFin[hnb] \gets \bot$\;
        $hnb \gets hnb +1$\;
    }

    $curHeight \gets curHeight + 1$\;
    $bh \gets blobhash(0)$\;
    \Comment{Force exactly one BLOB attached}
    \Assert{$bh \neq 0 \wedge blobhash(1) = 0$}\;
    $batchCommitments[curHeight] = (bh, msg.sender)$ \;
    $burnData[blockHeight] = blockBurnData$\;
    $blockFin[curHeight] \gets block.number + fpp$\;
}

\Fn{consumeBurnData($batchId, bracketId, txId$)}{
    \Assert{$msg.sender = L2\_to\_L1\_Bridge$}\;
    \Assert{$burnData[batchId][bracketId][txId] \neq \bot$}\;
    $tmp \gets burnData[batchId][bracketId][txId]$\;
    $burnData[batchId][bracketId][txId] \gets \bot$\;
    \Return{$tmp$}\;
}

\Fn{blockFinalized($batchId$)}{
    \Return{$hnb > batchId$}\;
}

\Fn{revertL2Chain($height:\mathbb{N}$)}{
    \Assert{$msg.sender = Judge$};
    \Assert{$blockFin[height] \neq \bot$}\;
    \Assert{$blockFin[height] \geq block.number$}\;
    \For{$i=height$ \emph{\KwTo} $curHeight$}{
        $blockFin[i] \gets \bot$\;
        $blockCommitments[i] \gets \bot$\;
        $burnData[i] \gets \bot$\;
    }
    $curHeight = height - 1$\;
}   

\end{algorithm}


Two types of data are recorded in the blockchain BLOB part. First, at the beginning of the batch, the updated root of the \textbf{Coin Tree}, which records all coins that have ever existed, and the root of the \textbf{Nullifier Tree}, which records all coins that have been spent, are stored as the current L2 state. The Merkle roots of these two trees can then be used to efficiently prove or dispute the validity of a coin. Note that the leaves of the tree, i.e., the coin commitments, are also published in the BLOB. This allows the construction of the tree path and the generation of the corresponding Merkle proof. Additionally, a transaction fee checkpoint $ck_{f}$ is also at the beginning of the batch to record transaction fee collection.

The second part is the executed transaction brackets batch. Given the high transaction throughput and the fraud-proof scheme used to guarantee the security of \ru, the goal here is to carefully define the published data structure so that a fraud-proof can be verified by referencing only a small, constant number of data words in BLOB. In \ru, each published batch begins with a header containing the number $m$ of transaction brackets, followed by $m$ pointers to the beginning of the respective transaction
bracket $tx_i$. A bracket serialization then begins with the bracket hash and the root of the coin tree after its execution, followed by the number of transactions and the number of required signatures.

To record transactions in a way that enables efficient access to any part of the transaction information, we first serialize the Inputs and Outputs, defined as the concatenation of the public information for inputs $s_{input}(I)$ and outputs $s_{output}(O)$ introduced earlier. The input information includes: the Coin Tree root $crt$, the serial number of the coin $sn$, the coin commitment $cm$, the public key for signatures $pk_{sig}$, and the zk-SNARK proof $\pi$. The output information includes: a bit indicating whether the serialization contains encrypted extra data about the private value, the coin commitment of this output $c$, and optional encrypted extra data $C_{inf}$. With such serialization, each transaction within a bracket begins with its transaction hash, which is defined as $h(s_{input}(I_0)||s_{input}(I_1)||\cdots || s_{output}(O_0)||s_{output}(O_1)||\cdots||fee||\allowbreak proof)$, ensuring that the commitment captures both the public components and the zero-knowledge proofs.

The transaction published along with the batch contains all public data required for verification. Each component is assigned a fixed byte length to enable efficient access under the pointer scheme. The transaction first specifies its token type, the number of inputs and outputs, pointers to the serialized inputs and outputs, and a reference that authenticates the coin tree root used by the input proof. This is followed by the fee field, the transaction proof, the serialized inputs and outputs, and finally the transaction signature.

For transfer transactions, the serialization follows this complete structure, explicitly specifying the number of inputs and outputs and including the transaction proof. In contrast, mint, burn, and fee-collecting transactions use implicit input or output structures. These transactions do not carry a transaction proof in the same form; instead, they include revealed inputs or outputs, such as a revealed output in a mint transaction or a revealed input in a burn or fee-collecting transaction.

\vspace{-1em}

\paragraph{On-chain fraud-proof.}The verifier must generate a fraud-proof based on the data published by the operator. Leveraging the published data structure described above, the verifier can pinpoint a specific error by revealing only a small, constant number of BLOB data words to the Judge contract. The Judge contract first checks that these revealed data words are indeed part of the BLOB at the claimed positions by verifying them against the BLOB commitment recorded in the Inbox contract. It then parses the fraud-proof using the provided data, which allows the Judge contract to validate the proof. If the proof is valid, the verifier may claim the operator’s stake as both punishment for the dishonest publisher and a reward for successful verification. The transactions published in the BLOB are considered executed only if no fraud-proof challenge is raised by a verifier within a specified time window.

We enumerate all possible violations of the protocol that could impact the benefits of honest clients as a set of rules from two perspectives: (1) \emph{semantic validity}, i.e., the published data structure correctly follows the specified format; and (2) \emph{proof validity}, i.e., validity proofs such as signatures, hash values, and zk-SNARK proofs can be correctly verified. The verifier can trigger the appropriate rule by submitting the corresponding fraud-proof evidence, requiring only a single round of communication with the Judge contract, rather than multiple rounds to localize the violation. We classify all rules into three categories:

\begin{algorithm}
\caption{Judge Contract Pseudocode (Contract 3)}\label{alg:judge}
\KwData{Required minimal stake $S$, Fraud-proof Period $fpp$}

$stake = (\mathcal{ID} \to \mathbb{N})$\;
$unstakeRequest = \mathcal{ID} \to (\mathbb{N} \to \mathbb{N})$\;

\Fn{hasStaked(id)}{
    \Return{$stake[id] \geq S$}\;
}

\Fn{stake()}{
    $stake[msg.sender] \gets stake[msg.sender] + msg.value$\;
}

\Fn{unstakeRequest()}{
    $tmp = stake[msg.sender]$\;
    \Assert{$tmp > 0$}\;
    $unstakeRequest[msg.sender][block.number + fpp] \gets tmp$\;
    $stake[msg.sender] \gets 0$\;\Comment{Can't retrieve stake before next fraud-proof period}
}

\Fn{unstake(sid)}{
    $tmp = unstakeRequest[(msg.sender, sid)]$\;
    \Assert{$tmp > 0$}\;
    \Assert{$sid \geq block.number$}\;
    $unstakeRequest[msg.sender][sid] \gets 0$\;
    $msg.sender.transfer(tmp)$\;
}

\Fn{slash(loser, winner)}{
    $loserStake \gets stake[loser]$\;
    \ForEach{$key \in unstakeRequest[loser]$}{
        $loserStake \gets loserStake + unstakeRequest[loser][key]$\;
        $unstakeRequest[loser][key] \gets 0$\;
    }
    $stake[loser] \gets 0$\;
    $winnerReward \gets loserStake$\;
    $winner.transfer(winnerReward)$\;
}

\Fn{disputeBlock(bID, data, indizes, disputeType, auxData)}{
    $checkFraudProof(bID, data, indices, disputeType,\allowbreak auxData)$;\Comment{According to the rules} 

    $Inbox.revertL2Chain(bID)$\;
    $slash(id_S, msg.sender)$\;
}
\end{algorithm}

\begin{enumerate}
    \item Rules about the validity of \textbf{transactions}
    \begin{enumerate}
        \item \label{rule:tx_type} Invalid transaction type.
        A transaction's type is not one of the three allowed types: mint, transfer, and burn. 
        \item \label{rule:tx_mint_values} Invalid mint transaction commitment. Can be disputed by recomputing commitment.
        \item \label{rule:tx_burn_values} Invalid burn transaction commitment. Can be disputed by recomputing commitment.
        \item \label{rule:tx_input_zk-SNARK} Invalid input zk-SNARK proof. Can be disputed by verifying a specific input $I_k$'s zk-SNARK proof $\pi_{I;k}$ with published parameters. 
        \item \label{rule:tx_proof_zk-SNARK}Invalid transaction zk-SNARK proof invalid. Can be disputed by verifying transaction proof $\pi_{tx}$ with the published parameters. 
        \item \label{rule:tx_hash} Invalid transaction hash. Can be disputed by recomputing the hash value. 
        
        \item \label{rule:tx_double_past} Transaction input or mint nonce double spending (not in the same bracket). Can be disputed by providing the double-spending serial number or bridge nonce $s$, the previous nullifier tree root $ntr$, and providing a Merkel tree inclusion proof that $s \in ntr$.
        
        \item \label{rule:tx_double_same} Transaction input or mint nonce double spends (in the same bracket). 
            Can be disputed by providing two serial numbers $s, s'$ in two inputs of transactions in the same bracket, and checking $s = s'$.
        \item \label{rule:tx_input_ctr}
        Invalid coin tree root. 
            Can be disputed by providing the coin tree root $ctr'$ referenced by the revealed transaction input $I_k = (\ldots, ctr, \ldots)$, and checking that $ctr \neq ctr'$. 
    \end{enumerate}
    \item Rules about the validity of transaction \textbf{brackets}
    \begin{enumerate}
        \item \label{rule:txb_hash} Invalid TX Bracket hash.
            Can be disputed by computing with all transaction hashes.
            
        \item \label{rule:txb_sig_count} Mismatch between input number and signature number. 
        Can be disputed by providing the number of inputs and the signature number for checking.
        
        \item \label{rule:txb_sig_error} Invalid signature error.
        Can be disputed by verifying the signature $\sigma_k$ with the public key $pk_k$ of the input on the bracket hash.
        
        \item \label{rule:txb_tx_count} Empty or oversize transaction bracket. 
            Can be disputed by providing the transaction count field for this transaction bracket.
    \end{enumerate}
    \item Rules about the validity of transaction \textbf{batches}
    \begin{enumerate}
        \item \label{rule:b_burn_copy} Burn transaction data on the contract does not correspond to the burn transaction in the batch.
        Can be disputed by providing the burn transaction in BLOB and comparing with the value $(t,v,g,id_{L1})$ stored in the Inbox Contract.
        \item \label{rule:b_burn_nonexist} Burn transaction not published to L1 blockchain but burn data is recorded in contract.
        Can be disputed by showing the transaction with index $txID$ in BLOB is not a according burn transaction. 
        \item \label{rule:b_coinbase_last} Last transaction in batch is not a fee-collecting transaction or not alone in a bracket. 
           Can be disputed by providing the type of the last transaction $txLast$ and the number $n_{tx}$ of transactions in the last bracket, and checking if $txLast$ is not a fee-collecting transaction or $n_{tx} > 1$.
        \item \label{rule:b_coinbase_other} fee-collecting transaction present in batch not at last position.
            Can be disputed by providing a transaction $tx$ and revealing its type, checking that this transaction is not in the last position, and is a fee-collecting transaction.
        \item \label{rule:b_fee_checkpoint} Intermediate fee calculation invalid. 
            Can be disputed by recalculating the collected transaction fee based on the previous transaction fee checkpoint at the beginning of the batch.
            
        \item \label{rule:b_coinbase_invalid} Invalid fee-collecting transaction. 
            Can be disputed by recalculating the commitment value with the published parameter.
            
        \item \label{rule:b_ctr_checkpoint} Intermediate coin tree root calculation invalid.
            It can be disputed by providing the front of the Merkle tree before the disputed transaction bracket and recalculating the updated root based on the outputs.

        \item \label{rule:b_ntr_checkpoint} Intermediate nullifier tree root invalid. 
            Can be disputed by providing the front of the Merkle tree before the disputed transaction bracket and recalculating the updated root based on the inputs.

    \end{enumerate}
\end{enumerate}




\subsection{Leaving Protocol}

As the final procedure for \ru, clients can leave the protocol by generating the burn transaction. After the transaction is executed, following the execution procedure defined before, the client can further retrieve the corresponding coins from the on-chain smart contract, which is shown in Algorithm~\ref{alg:bridge_l2_l1}.

A Burn Transaction consists of one coin with commitment $cm$ that the client wants to peg out with. A Burn Transaction does not have an output. The body of a burn transaction contains: (1) the L1 address $id_{L1}$ that is used to claim the burned assets on L1 through a contract. The bridge contract will check if the withdrawal request comes from this identity; (2) the private values $token, value, fee$ of the burnt coin; (3) the zk-SNARK proof to show the validity of such input; (4) a fee value $fee$ for publishing the burn transaction. Note that the fee for burn transactions is higher than for transfer transactions, because the operator must pay for L1 storage for every burn transaction. 
    This further protects against spamming L1 by financially incentivising users to first join their unused coins with the same token type by means of transfer transactions before burning them once.

\begin{algorithm}[!htp]
\caption{L2 to L1 Bridge (Contract 4)}\label{alg:bridge_l2_l1}


\Fn{$retrieve(batchId, bracketId, txId)$}{
    \Assert{$Inbox.blockFinalized(batchId)$}\; 
    $(t, v, g, id_{L1}) \gets Inbox.consumeBurnData(batchId, bracketId, txId)$\;
    \If{$id_{L1} = msg.sender$}{
    \If{$t \neq 0$}{
        $t.transfer(id, v)$; \Comment{Transfer tokens from bridge to user}
    }
    \If{$g \neq 0$}{
        $id_{L1}.transfer(g)$; \Comment{Transfer ETH from bridge to user}
    }
    }
}
\end{algorithm}





\section{Security Analysis}
\label{sec:analysis}

In this section, we prove the security and the privacy-preserving property of our \ru protocol. We separately analyze these properties in the traditional Byzantine security model and the rational security model. Due to the page limit, we only show the general proof idea here, the detailed proof can be found in Appendix~\ref{apx:Byzantine}. 


Under the Byzantine security model, we first prove that \ru realizes \emph{balance security}, \emph{liveness}. The general proof idea is that, in order to break these security properties, the adversary should be able to break the underlying assumptions, either the ideal underlying ledger or the ideal cryptographic tools that we are using in the system. We also prove \ru realizes L2 transaction privacy through the L2-IND game.

\begin{restatable}[Balance security]{theorem}{Thmbalance}
    \ru realizes balance security under the Byzantine security model.
\end{restatable}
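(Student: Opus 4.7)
The plan is to prove balance security by case analysis on the three phases of an honest client's lifecycle: joining, transaction execution, and leaving. In each case I would show that any adversarial strategy that causes an honest client to lose coins must reduce to either (i) breaking a standard cryptographic assumption (collision-resistance of $h$, unforgeability of the signature scheme, or computational soundness of the zk-SNARK), (ii) violating safety or liveness of the underlying L1 ledger, or (iii) contradicting the assumption that at least one honest verifier monitors the chain and submits fraud-proofs within the window $fpp$.

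First I would address the joining phase. An honest client deposits via \texttt{toL2}, which atomically locks the funds in Contract~1 and returns a fresh nonce $n$ recorded in $nonces[n]$. The only way the client's L1 funds become unrecoverable is if a mint transaction later appears in a finalized batch whose claimed $(pk_{sig}, token, value, fee)$ tuple disagrees with $nonces[n]$, or if some other batch nullifies $n$. The former is caught by rule~\ref{rule:tx_mint_values} (checked directly against contract records), the latter by the double-spending rules~\ref{rule:tx_double_past} and~\ref{rule:tx_double_same}. Since the honest verifier detects any such deviation and triggers \texttt{revertL2Chain} through \texttt{disputeBlock}, the deposit either materializes as a valid L2 coin owned by the client or remains available to be minted correctly in a future batch.

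For the execution phase, I would argue that an honest client's coin $cm$ can only disappear from the usable set in one of two ways. Either the operator publishes a batch containing a transaction that nullifies $cm$ without the owner's consent, which requires producing a valid input proof (Circuit~\ref{alg:zk-SNARK_coin_input}) and a bracket signature under a key derived from $sk_{coin}$; by completeness of the input circuit and the bracket-signature check, this reduces to forging a zk-SNARK, inverting $h$ on $sn$, or forging a signature. Or the operator publishes a malformed state (wrong coin-tree or nullifier-tree checkpoint, missing output commitment, invalid proof, wrong transaction hash, etc.), each of which is covered by exactly one of the enumerated rules~\ref{rule:tx_input_zk-SNARK}--\ref{rule:tx_input_ctr}, \ref{rule:txb_hash}--\ref{rule:txb_sig_error}, and~\ref{rule:b_fee_checkpoint}--\ref{rule:b_ntr_checkpoint}. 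Under the pointer scheme the honest verifier can localize the offending word and submit a one-step fraud-proof within $fpp$ blocks, forcing a revert before finalization. The leaving phase is then handled analogously: once a batch is finalized with an honest burn transaction, the Inbox must store the corresponding $(t,v,g,id_{L1})$ record; any mismatch or omission is caught by rules~\ref{rule:b_burn_copy} and~\ref{rule:b_burn_nonexist}, and otherwise \texttt{retrieve} in Contract~5 releases exactly the burned amount to the claimed L1 address.

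The main obstacle I expect is not any single cryptographic reduction but establishing \emph{completeness of the fraud-proof rule set}: I have to argue that every Byzantine deviation that could cost an honest client coins is witnessed by at least one rule that the verifier can trigger using only a constant number of BLOB words. Concretely this requires showing that the serialized BLOB layout plus the coin-tree/nullifier-tree/fee checkpoints published at the start of each batch uniquely determine the next valid state, so that any divergence either (a) changes a field whose consistency is directly hashed into a checkpoint, or (b) is detectable by a local check on a constant-sized region pointed to by the verifier. Given this, the proof then closes by a standard hybrid: assume an adversary that breaks balance security with non-negligible probability; case-split on the first phase in which the honest client's balance deviates; apply the corresponding rule and the honest-verifier assumption to derive either a revert (contradiction) or a break of an underlying cryptographic primitive.
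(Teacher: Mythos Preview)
Your proposal is correct and follows essentially the same strategy as the paper: a phase-by-phase case analysis (joining, execution, leaving) in which every adversarial deviation is mapped either to a fraud-proof rule that the assumed honest verifier will trigger, or to a break of one of the underlying primitives (hash, signature, zk-SNARK, ideal L1 ledger).

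The one structural difference worth noting is that the paper factors out a separate lemma (its Lemma~4, ``Correct token value change'') establishing a \emph{global} conservation invariant: the adversary cannot inflate the total token supply on L2 via double-spends, unbalanced transfers, or duplicated mint/burn claims. Your execution-phase argument is phrased locally---``an honest client's coin $cm$ can only disappear from the usable set''---which by itself does not guarantee that the L1 bridge remains solvent when the honest client finally calls \texttt{retrieve}; an adversary who never touches the honest client's coin but fabricates and burns fictitious value could still drain Contract~5 first. You do implicitly cover this through your completeness-of-rules obstacle (rules~\ref{rule:tx_proof_zk-SNARK}, \ref{rule:tx_double_past}, \ref{rule:tx_double_same} are exactly what enforce conservation), but when you write the argument out it will read more cleanly if, like the paper, you separate the per-coin unforgeability claim from the global supply-conservation claim rather than folding both into ``malformed state is caught by some rule.''
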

\vspace{-1em}
\begin{restatable}[Liveness]{theorem}{Thmliveness}
    As long as there exists one honest operator and one honest verifier, the \emph{liveness} of \ru is guaranteed.
\end{restatable}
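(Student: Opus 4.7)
The plan is to reduce liveness of \ru to the liveness of the underlying L1 blockchain together with the existence of at least one honest operator and one honest verifier. I would proceed by case analysis on the type of valid request a participant may issue, namely (i) a join request (deposit plus mint), (ii) a transfer request (including fee-collecting and atomic bracket forms), and (iii) a leave request (burn plus retrieve). For each type I would argue that within a bounded number of L1 block intervals (the known upper bound on L1 writes plus the fraud-proof period $fpp$) the request is either reflected in a finalized batch or is provably invalidated, and therefore cannot be left in a perpetually pending state.

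First I would handle the requests that directly interact with the L1 contracts. A join request's deposit step is an L1 transaction submitted via Contract~1 and hence inherits liveness from the idealized ledger. Similarly, once a burn transaction has been finalized in Contract~2, the $retrieve$ call in Contract~5 is an L1 transaction that is guaranteed to be delivered. This leaves the off-chain part of the pipeline: the mint, transfer, and burn transactions must be batched, published to Contract~2 inside a BLOB, and survive the challenge window. Since the client is assumed to be able to forward its request to any operator and at least one honest operator exists, I would invoke the assumption of eventual message delivery to conclude that the request reaches the honest operator, who by protocol will include it in its next batch and submit the corresponding $newBatch$ call within a bounded number of L1 blocks.

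Next I would argue that the published batch either becomes finalized or is reverted, so the request cannot linger indefinitely. If the honest operator submits a well-formed batch, then by the completeness of the zk-SNARK system, the correctness of Merkle computations, and the semantic checks enumerated in the fraud-proof rules, no valid fraud-proof against that batch exists; hence $blockFin$ will elapse and $blockFinalized$ will return true, committing the request. If instead a malicious operator submits the batch and it contains a violation that affects the honest client's request, the honest verifier, who by assumption monitors the L1 blockchain continuously and has uncensored L1 write access, can identify the offending position using the pointer scheme and invoke $disputeBlock$ before $blockFin$ expires; by the soundness of the fraud-proof rules $Inbox.revertL2Chain$ is triggered, which invalidates the batch and returns the request to the pending pool, where the honest operator will re-include it in a subsequent batch. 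Iterating this argument a finite number of times (bounded by the operator stake, since each successful dispute slashes the misbehaving operator) shows that every valid request is eventually either committed or invalidated.

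The main obstacle I expect is the censorship subcase: showing that no coalition of malicious operators can indefinitely prevent an honest client's request from reaching the honest operator, and that the honest operator's submissions are not themselves starved by adversarial $newBatch$ races on $curHeight$ and $prevBatch$. I would address this by relying on the asynchronous-but-eventually-delivered channel assumption to guarantee that the honest operator observes the request, and by noting that $newBatch$ advances $curHeight$ monotonically so that the honest operator's batch, once accepted at some height, is immune to later overwriting; any intervening malicious batch either includes the request (in which case we are done after the fraud-proof window) or is subject to the dispute argument above. A careful bookkeeping of the bounded number of rounds needed before the operator's stake is exhausted, combined with the L1 delay bound and $fpp$, yields the concrete finite time by which each valid request is resolved.
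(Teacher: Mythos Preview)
Your proposal is considerably more detailed than the paper's own proof, which is a brief two-paragraph argument. The paper does not split on request types; it simply argues (i) that the honest operator's batches eventually land on L1 given asynchronous delivery and a round-robin operator election, with any intervening dishonest batch being disputed by the honest verifier, and (ii) that a malicious verifier cannot forge a valid fraud-proof against a correctly constructed batch without breaking the underlying cryptographic primitives (hash collisions, preimages, zk-SNARK soundness) or reverting the ideal L1 ledger.

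The substantive divergence, and the place where your argument has a genuine gap, is in how you guarantee that the honest operator's batch is eventually accepted. You attempt a termination argument via repeated slashing of misbehaving operators, but this fails in precisely the censorship subcase you yourself flagged: a coalition of malicious operators can continuously submit \emph{valid} batches that simply omit the honest client's transaction. Such batches trigger no fraud-proof rule and therefore incur no slashing, so your ``bounded by the operator stake'' iteration need not terminate. Your observation that $curHeight$ advances monotonically does not help either, since the honest operator's $prevBatch$ argument becomes stale whenever a malicious batch lands first. The paper sidesteps this not via slashing but by invoking a round-robin operator election that guarantees the honest operator a publishing turn; you should appeal to that (or an equivalent anti-starvation mechanism) rather than to stake exhaustion.

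Your handling of the ``no spurious fraud-proof against a correct batch'' direction is adequate but phrased via completeness of the primitives; the paper frames the same point from the adversary's side, explicitly reducing any successful spurious dispute to producing a hash collision, computing a preimage, or violating zk-SNARK soundness or L1 consistency. Making that reduction explicit would align your argument with the paper's.
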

\vspace{-1.5em}
\begin{restatable}[L2 transaction privacy]{theorem}{Thmprivacy}
    \label{lem:ledger_ind}
    The \ru protocol is an L2 transaction privacy-preserving optimistic-Rollup protocol through realizing L2 transaction indistinguishability.
\end{restatable}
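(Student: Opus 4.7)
The plan is to prove Theorem~3 by a sequence of game hops (a hybrid argument) that gradually strips every piece of computationally hiding data from the adversary's view, reducing each hop to one of the cryptographic assumptions already granted in Section~3.1. The starting point is the real L2-IND game with challenge bit $b$; the target is a game whose transcript is independent of $b$, so that any non-negligible distinguishing advantage must violate zk-SNARK zero-knowledge, the hiding property of the hash-based commitments, IND-CPA security of $Enc$, or collision resistance / preimage resistance of $h$.

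First I would fix the ``public consistency'' bookkeeping. The game definition forces the two paired requests $Q_0,Q_1$ of types \emph{CommitBatch}, \emph{PegIn}, \emph{PegOut}, \emph{WriteOnL1} to carry identical public payloads (batch size, L1 address, deposited token type and amount, burn receivers, etc.), and both are processed by the same $L_1$ and $\mathcal{V}$ instances under the same sequence numbers. I would first show that, modulo these publicly identical values, everything the adversary can read from $L_1$ decomposes into (i)~zk-SNARK proofs $\pi_{I;k},\pi_{tx}$, (ii)~coin commitments $c$ and $cm$, (iii)~serial numbers $sn$, (iv)~ciphertexts $C_{inf}$, (v)~signatures on bracket hashes under $pk_{sig}$, and (vi)~the two Merkle roots $crt$, $ntr$. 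This enumeration is what the pointer scheme and BLOB layout of Section~5.2 make exhaustive, so no side-channel of L2 state leaks past it.

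Next I would perform the hybrid chain on these six sources, in the order that keeps simulation well-defined. Hybrid~$H_1$ replaces every $\pi_{I;k}$ and $\pi_{tx}$ by outputs of the zk-SNARK simulator; indistinguishability from $H_0$ follows from zero-knowledge because the circuits of Algorithms~2 and~3 have NP statements whose public inputs ($\underline{crt},\underline{sn},\underline{cm},\underline{pk_{sig}},\underline{C},\underline{isConnected},\underline{isInput},\underline{fee}$) are already forced to be identical (or already revealed) by public consistency. Hybrid~$H_2$ replaces each $C_{inf}$ by an encryption of $0^{|\cdot|}$, reducing to IND-CPA. Hybrid~$H_3$ replaces every fresh coin commitment $c=h(token\|value\|fee\|h(p\|pk_{coin}))$ and input commitment $cm$ by uniformly random field elements, using the hiding property inherited from modelling $h$ as a random oracle (equivalently, the hash-based commitment is computationally hiding because $p$ is uniform and private). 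Hybrid~$H_4$ replaces each serial number $sn=h(p\|sk_{coin})$ by a fresh random value, again via the random-oracle / PRF behaviour of $h$ on secret $sk_{coin}$; in the same hop, $pk_{coin}=h(sk_{coin})$ is replaced by a random element for any key the adversary has not explicitly created via \emph{CreateAddress}. Finally Hybrid~$H_5$ re-randomises the signatures on bracket hashes; here the signing keys $pk_{sig}$ are the only authenticators and the L2-IND rules constrain them identically across $Q_0,Q_1$, so standard EUF-CMA arguments give indistinguishability. After $H_5$ the BLOB content visible on $L_1$ is a function only of the publicly consistent metadata, so the transcript is independent of $b$ and the winning probability is exactly $1/2$.

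The hardest step will be $H_3$/$H_4$, because the same $sk_{coin}$ and $p$ values are reused inside the Merkle tree openings implicit in the zk-SNARK witnesses and inside the nullifier tree updates that the fraud-proof rules of Section~5.2 can force the simulator to expose. I would handle this by arguing that after $H_1$ the simulator never has to produce a genuine witness, only to update the append-only Coin tree and Nullifier tree with the \emph{simulated} commitments and nullifiers; the verifier oracle $\mathcal{V}$ is honest by assumption of at least one honest verifier, but in the L2-IND game $\mathcal{V}$ only challenges when a published structural rule is violated, and the simulated BLOB is by construction rule-compliant. A subtler point is the burn transaction, which publicly reveals $(t,v,g,id_{L1})$ inside the Inbox contract (Algorithm~5); public consistency of \emph{PegOut} already aligns these four values across $Q_0$ and $Q_1$, so this leakage does not help the distinguisher. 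Combining the five negligible advantages via a standard union bound then yields the claimed privacy bound, completing the reduction.
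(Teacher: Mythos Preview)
Your proposal follows essentially the same hybrid strategy as the paper: simulate all zk-SNARK proofs via the trapdoor, replace the ciphertexts $C_{inf}$ by encryptions of junk, replace the hash-derived values ($sn$, $pk_{coin}$, $pk_{auth}$) by fresh randomness via the PRF property of $h$, and replace the coin commitments by random field elements via hiding, arriving at a game independent of $b$. Two minor discrepancies are worth noting. First, the paper runs the last two hops in the opposite order (hash digests first, then commitments), which is cleaner because the hiding argument for $cm = h(token\|value\|fee\|pk_{auth})$ needs $pk_{auth}$ to already be uniform; in your ordering you are implicitly invoking the PRF property inside $H_3$, so your stated justification ``because $p$ is uniform and private'' covers $c$ but not $cm$. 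Second, your extra hop $H_5$ on signatures is unnecessary---once all payload fields are identically distributed in both worlds the bracket hashes, and hence the honest signatures on them, are identically distributed---and the appeal to EUF-CMA is the wrong primitive (unforgeability says nothing about indistinguishability of honestly generated signatures). Neither point is a real gap; with the hop order swapped and $H_5$ dropped, your argument coincides with the paper's.
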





\section{Implementation and Evaluation}
\label{sec:implementation}

To demonstrate the feasibility of \ru, we implemented the protocol's smart contracts in Solidity v0.8.28 and the zk-SNARK circuits with Zokrates~\cite{zokrates}, and deployed the system on an Ethereum testnet. Our evaluation answers three questions: (i) what are the on-chain gas costs of each phase of the protocol, in both the optimistic and pessimistic cases; (ii) how does \ru compare asymptotically to existing zk- and optimistic-Rollup designs; and (iii) how does \ru compare empirically to a state-of-the-art privacy-preserving Rollup, Aztec, in terms of L1 cost, data-availability footprint, and client-side proving time.

Unless otherwise stated, we assume a gas base price of $8.66 \times 10^{-9}$ ETH and a BLOB gas base price of $3.52 \times 10^{-9}$ ETH, both average values taken over the previous year, and an exchange rate of $2690 : 1$ between USD and ETH.

\subsection{Experimental Setup}

We evaluate \ru along two independent paths. The on-chain costs of normal protocol operation (deployment, joining, batch publishing, and leaving) are obtained from a Python-based evaluation of our Solidity contracts, while the per-rule fraud-proof costs are measured with the Foundry gas reporter against the deployed Judge contracts. The Judge logic is split across three contracts, \texttt{TransactionJudge}, \texttt{BracketJudge}, and \texttt{BatchJudge}, with a combined bytecode size of $16{,}251$ bytes and a one-time deployment cost of $3{,}657{,}020$ gas; the total deployment of all judges, the Inbox contract, and the verifiers is $9{,}616{,}195$ gas. 

For the fraud-proof costs in Table~\ref{tab:fraudproof}, we distinguish \emph{measured} rules, whose dispute paths are exercised end-to-end against the deployed contracts, from \emph{synthetic} rules, whose execution gas is measured but whose KZG point-evaluation overhead is estimated rather than triggered on a live BLOB; the latter are marked accordingly. The reported total for each rule combines Judge-contract execution gas with the KZG overhead required to authenticate the revealed BLOB words against the batch commitment.

For the comparison in Section~\ref{subsec:aztec}, we benchmark Aztec v4.2.0-aztecnr-rc.2 on its local sandbox network. \ru's data-availability footprint and per-transfer L1 cost are amortized over a full blob ($86$ transfers, $147{,}763$ execution gas plus $131{,}072$ blob gas per batch). Aztec's figures are derived from the sandbox: the per-private-transfer data-availability size is taken from the reported \texttt{totalSizeInBytes} ($1{,}376$ bytes), and its amortized L1 cost from $2{,}401{,}898$ total L1 gas over $9$ L1 operations. The client proof time reported for \ru is the time to generate the transfer proof, whereas Aztec's figure is an end-to-end private-transfer measurement that also includes sequencing; the two are therefore not directly comparable and we treat the proving-time gap as indicative only.

\subsection{On-chain Cost}

The gas costs for each phase of \ru are shown in Table~\ref{tab:calyx_phases}. Deploying all the smart contracts introduced in Section~\ref{sec:design} costs approximately $0.43266$ ETH. This deployment is performed only once and supports long-term protocol operation; the relatively high cost arises from pre-listing all fraud-proof logic on-chain, which significantly reduces the cost of subsequent L1 interactions. Since gas prices fluctuate, deployment can additionally be scheduled when the base price is below average.

For a client to join the protocol, it must send a peg-in transaction to the L1 smart contract and create a mint transaction, which the operator later publishes to L1. The combined cost of these two steps is $0.00217$ ETH. The off-chain parts of joining and transaction preparation require no L1 publication and therefore incur zero gas cost.

We next evaluate the cost of publishing a batch. The batch size is constrained by the BLOB size; under our construction, a single batch can include up to 269 mint, 167 burn, or 86 transfer transactions. The amortized per-transfer publishing cost is $0.00002$ ETH, which also reflects the execution cost of a transaction in the optimistic case, since no fraud-proof is published when the operator behaves honestly. Finally, for a client to leave the protocol, it sends a burn transaction that the operator publishes, followed by a retrieval transaction on L1; this process costs $0.00190$ ETH in total.

\begin{table}[h]
\centering
\caption{Per-phase on-chain cost of \ru. Deployment is a one-time cost; the transfer cost is amortized over a full blob of 86 transfers.}
\label{tab:calyx_phases}
\resizebox{\columnwidth}{!}{%
\begin{tabular}{|l|r|r|r|}
\hline
Phase & Gas & ETH & USD \\ \hline
Deployment (one-time)        & 49{,}959{,}175 & 0.43266 & 1163.84 \\ \hline
Join (approve + lock + mint) & 250{,}455      & 0.00217 & 5.83    \\ \hline
Transfer (amortized per tx)  & 2{,}339        & 0.00002 & 0.05    \\ \hline
Leave (unlock + burn batch)  & 219{,}772      & 0.00190 & 5.12    \\ \hline
\end{tabular}%
}
\end{table}

\paragraph{Fraud-proof cost.} Because a verifier's willingness to challenge depends on the cost of doing so, we report the on-chain cost of \emph{every} fraud-proof rule rather than a single worst case. Table~\ref{tab:fraudproof} lists the gas cost of each rule defined in Section~\ref{sec:design}, decomposed into Judge-contract execution and the KZG point-evaluation overhead required to authenticate the revealed BLOB words. The costs range from $0.0025$ ETH-equivalent for simple structural rules (e.g., invalid transaction type or bracket count) to the worst case under rule~\ref{rule:tx_proof_zk-SNARK}, where verifying a transaction proof with the maximum of 8 inputs and outputs costs $2{,}791{,}028$ gas ($0.02417$ ETH). Crucially, every rule is disputable in a \emph{single} on-chain interaction of constant size; no rule requires multi-round localization. Since these costs are incurred only in the pessimistic case, and remain well below the operator's stake, an honest verifier is always economically able to challenge misbehavior.

\begin{table}[h]
\resizebox{\columnwidth}{!}{%
\begin{tabular}{|c|l|c|c|c|}
\hline
Rule & Description & Judge gas & + KZG & USD \\ \hline
\ref{rule:tx_type}       & Invalid TX type             & 11{,}702  & 161{,}702   & 3.77  \\ \hline
\ref{rule:tx_mint_values}& Invalid mint commitment     & 30{,}780  & 280{,}780   & 6.54  \\ \hline
\ref{rule:tx_burn_values}& Invalid burn commitment     & 11{,}665  & 511{,}665   & 11.92 \\ \hline
\ref{rule:tx_input_zk-SNARK}& Invalid input proof      & 171{,}050 & 921{,}050   & 21.46 \\ \hline
\ref{rule:tx_proof_zk-SNARK}& Invalid TX proof         & 559{,}619 & 2{,}209{,}619 & 51.48 \\ \hline
\ref{rule:tx_hash}       & Invalid TX hash             & 417{,}703 & 2{,}117{,}703 & 49.33 \\ \hline
\ref{rule:tx_double_past}& Double spend (cross-bracket)& 319{,}837 & 819{,}837   & 19.10 \\ \hline
\ref{rule:tx_double_same}& Double spend (same bracket) & 319{,}837 & 1{,}319{,}837 & 30.75 \\ \hline
\ref{rule:tx_input_ctr}  & Invalid coin tree root      & 205{,}273 & 955{,}273   & 22.25 \\ \hline
\ref{rule:txb_hash}      & Invalid bracket hash        & 39{,}559  & 239{,}559   & 5.58  \\ \hline
\ref{rule:txb_sig_count} & Signature count mismatch    & 35{,}713  & 285{,}713   & 6.66  \\ \hline
\ref{rule:txb_sig_error} & Invalid signature           & 219{,}382 & 969{,}382   & 22.58 \\ \hline
\ref{rule:txb_tx_count}  & Empty/oversize bracket      & 6{,}927   & 106{,}927   & 2.49  \\ \hline
\ref{rule:b_burn_copy}   & Burn data L1 mismatch       & 71{,}772  & 571{,}772   & 13.32 \\ \hline
\ref{rule:b_burn_nonexist}& Burn data, no matching TX  & 12{,}270  & 162{,}270   & 3.78  \\ \hline
\ref{rule:b_coinbase_last}& Last TX not fee-collecting & 20{,}259  & 170{,}259   & 3.97  \\ \hline
\ref{rule:b_coinbase_other}& Fee-collecting misplaced  & 15{,}998  & 165{,}998   & 3.87  \\ \hline
\ref{rule:b_fee_checkpoint}& Fee checkpoint invalid    & 111{,}187 & 1{,}111{,}187 & 25.89 \\ \hline
\ref{rule:b_coinbase_invalid}& Invalid fee-collecting TX& 90{,}629 & 590{,}629   & 13.76 \\ \hline
\ref{rule:b_ctr_checkpoint}& Coin tree root invalid    & 205{,}273 & 955{,}273   & 22.25 \\ \hline
\ref{rule:b_ntr_checkpoint}& Nullifier root invalid    & 220{,}000 & 720{,}000   & 16.77 \\ \hline
\end{tabular}%
}
\caption{Per-rule fraud-proof cost, decomposed into Judge-contract execution gas and total gas including KZG point-evaluation overhead for BLOB authentication. Every rule is disputable in a single constant-size interaction.}
\label{tab:fraudproof}
\end{table}

\subsection{Asymptotic Comparison}

We compare \ru asymptotically with the SNARK--aggregation--based zk-Rollup Aztec and the Arbitrum optimistic-Rollup. Assuming a batch of $n$ transactions, the comparison is shown in Table~\ref{table:comparison}.

Since all three protocols realize \emph{data availability}, every executed transaction is published in the BLOB space of L1, giving $O(n)$ on-chain cost. For non-BLOB L1 cost, Aztec requires the on-chain contract to verify the aggregated proof produced by the operator~\cite{aztec}; although no formal analysis of their aggregation scheme is available, prior work on SNARK aggregation~\cite{liu2023snarkfold} shows that, absent preprocessing, verification requires $O(\log n)$ computation. Arbitrum's bisection scheme~\cite{kalodner2018arbitrum} introduces $O(\log n)$ on-chain challenge interactions in the pessimistic case. \ru, under optimistic execution, requires only $O(1)$ on-chain checkpointing, like Arbitrum; but owing to its one-step fraud-proof scheme, it further reduces pessimistic verification to $O(1)$ interaction time and proof size. Regarding privacy, Arbitrum offers none, as all transactions appear on-chain in plaintext, whereas both Aztec and \ru achieve transaction-level privacy via zk-SNARKs.

\begin{table}[h]
\resizebox{\columnwidth}{!}{%
\begin{tabular}{|c|c|cc|c|}
\hline
      & BLOB L1 cost & \multicolumn{2}{c|}{non-BLOB L1 cost}                & Privacy-preserving        \\ \hline
Aztec & $O(n)$         & \multicolumn{2}{c|}{$O(\log n)$}        & $\checkmark$ \\ \hline
\multirow{2}{*}{Arbitrum}           & \multirow{2}{*}{$O(n)$} & \multicolumn{1}{c|}{Optimistic} & Pessimistic & \multirow{2}{*}{$\times$}     \\ \cline{3-4}
      &              & \multicolumn{1}{c|}{$O(1)$} & $O(\log n)$ &                           \\ \hline
\multirow{2}{*}{\ru} & \multirow{2}{*}{$O(n)$} & \multicolumn{1}{c|}{Optimistic} & Pessimistic & \multirow{2}{*}{$\checkmark$} \\ \cline{3-4}
      &              & \multicolumn{1}{c|}{$O(1)$} & $O(1)$                    &                           \\ \hline
\end{tabular}%
}
\caption{Asymptotic comparison of L1 cost and privacy, for a batch of $n$ transactions published on L1.}
\label{table:comparison}
\end{table}

\subsection{Empirical Comparison with Aztec}
\label{subsec:aztec}

To ground the asymptotic analysis, we empirically compare \ru against Aztec (v4.2.0) on a local sandbox network. Table~\ref{tab:aztec_comparison} summarizes the per-transfer L1 cost, data-availability footprint, client proving time, and finality of the two systems.

In the optimistic case, \ru publishes a private transfer for an amortized cost of about $2{,}339$ gas (\$0.05) over a full blob of 86 transfers. The data-availability footprint is comparable between the two systems, about $1{,}475$ bytes per two-input/two-output transfer in \ru versus $1{,}376$ bytes in Aztec, yielding a similar blob capacity of 86 versus 92 transfers per blob, and a per-transfer full-blob L1 cost of roughly \$0.055 for \ru against \$0.081 for Aztec. Unlike Aztec, \ru incurs no on-chain verification in the optimistic case, paying an on-chain fraud-proof cost (worst case \$65.02) only when an operator misbehaves and is challenged.


\begin{table}[h]
\centering
\caption{Empirical comparison of \ru and Aztec (v4.2.0-aztecnr-rc.2, sandbox local network).}
\label{tab:aztec_comparison}
\resizebox{\columnwidth}{!}{%
\begin{tabular}{|l|r|r|}
\hline
Metric & \ru & Aztec \\ \hline
Publihsed data per 2-in/2-out transfer       & 1{,}475 B   & 1{,}376 B \\ \hline
Transfers per blob               & 86          & 92        \\ \hline
L1 batch gas (per batch)    & 147{,}763   & $\sim$265{,}000 \\ \hline
L1 cost per transfer (full blob) & \$0.0545    & \$0.0806  \\ \hline
Fraud proof (worst case)         & \$65.02     & N/A (ZK)  \\ \hline
Finality                         & $\sim$7 days & $\sim$minutes \\ \hline
Deployment (total)               & 9{,}616{,}195 gas & (pre-deployed) \\ \hline
\end{tabular}%
}
\end{table}

\subsection{Discussion}

Taken together, these results show that \ru delivers multi-token privacy at an optimistic per-transfer cost competitive with, and slightly below, a state-of-the-art privacy-preserving zk-Rollup, while keeping pessimistic on-chain verification constant-size. The principal cost of the optimistic design is delayed finality. Since the highest gas costs arise only when an operator misbehaves and a verifier challenges, and since \ru achieves L2 transaction privacy simultaneously, we conclude that its overall overhead is acceptable for privacy-preserving multi-token payments.

\section{Extension}

\textbf{\indent Apply to CBDC.} \ru is initially designed for privacy-preserving, scalable multi-token exchange on the Ethereum platform, but we believe the underlying idea can be applied to broader scenarios. For instance, Central Bank Digital Currencies (CBDCs)~\cite{ozili2023cbdc} have gained increasing attention in recent years as a convergence of blockchain systems and centralized payment infrastructures. Exploring how to achieve both privacy and efficiency in such a setting using the principles of \ru represents an interesting research direction.

\textbf{Transfer among different token types.} In the current design of \ru, each transaction requires that all inputs and outputs belong to a single token type, ensuring that no loss occurs due to dramatic exchange rate fluctuations in the Ethereum token market~\cite{okx2025neiroethcrash}. We believe \ru can be extended to support transfers across different token types if a fixed globally agreed-upon exchange rate is encoded at the time the smart contract is deployed.


\textbf{Extend to contract execution.} The \ru protocol is initially designed for transaction execution, and thus the fraud-proof rules and zk-SNARK circuits are tailored to provide security guarantees in this setting. If detailed security guarantees, fraud-proof schemes, and corresponding zk-SNARK circuits can be defined for specific smart contract executions, we believe the design of \ru can be further extended to support contract execution scenarios.

\section{Conclusion}

In this paper, we propose the very first privacy-preserving multi-token optimistic-Rollup protocol, termed \ru. \ru leverages zk-SNARKs to protect L2 transaction privacy, combined with a carefully designed interaction model with the L1 blockchain to achieve state consistency across layers and to support efficient L1 fraud-proof interaction. Beyond presenting the protocol, we analyze its security under both the Byzantine and rational adversary models. We also formally define L2 transaction privacy for Rollup protocols and provide a corresponding analysis. Finally, we implement the on-chain components of \ru, demonstrating its practicality.

\section*{Acknowledgement}
This work was partially supported by the Vienna Science and Technology Fund (WWTF) through the project 10.47379/ICT22045; by the Austrian Science Fund (FWF) through the
SpyCode SFB project F8510-N and F8512-N; by the European Research Council (ERC) under the European Union’s Horizon 2020 research (grant agreement 101141432-BlockSec).

\bibliographystyle{plain}
\bibliography{reference}

\appendix

\section{Proof for the Byzantine security model }
\label{apx:Byzantine}

\subsection{Balance security}

The main idea for proving balance security in \ru is to show that the adversary cannot affect the correctness of each participant's L2 and L1 states. In the following, we demonstrate that across the three phases of \ru—protocol joining, L2 transaction execution, and protocol leaving—the adversary is unable to influence honest participants.

\paragraph{Protocol joining.}The security goal for this first step can be concluded as following Lemma:

\begin{lemma}[Correct joining]
For any deposit on the L1 blockchain, a corresponding \ru L2 state is initialized through a mint transaction.
\end{lemma}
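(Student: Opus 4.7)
The plan is to decompose the statement into three chained guarantees: (a) an L1 deposit persistently records all data required to authorize a mint; (b) the record uniquely determines a syntactically valid mint transaction that the client can construct; (c) under the Byzantine assumptions, this mint is eventually included in a finalized batch and therefore the corresponding L2 state is initialized.

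First I would invoke the semantics of Contract~1 (Algorithm~\ref{alg:bridge_l1_l2}). The \emph{toL2} routine is the unique entry point to the bridge and it atomically (i) transfers the deposited ERC-20 balance to the contract, (ii) increments \emph{curNonce} to a fresh index $n$, and (iii) writes $nonces[n] = (pk_{sig}, t, v, g, \text{block.number})$. By the \emph{consistency} of the idealized L1 ledger this record is permanent and publicly readable through \emph{getMintData}, so after the deposit transaction is confirmed every participant (client, operator, verifiers) observes the same authoritative bridge entry.

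Second I would argue that this bridge entry uniquely determines a valid mint transaction. The client, using the recorded $(t,v,g)$ together with locally chosen randomness $p$ and ownership keys $(sk_{coin},pk_{coin})$, builds the commitment $c = h(t \| v \| g \| h(p \| pk_{coin}))$ and attaches the nonce $n$ and compressed identity $k = h(p \| pk_{coin})$ to the transaction body. As noted in Section~\ref{sec:design}, mint validity is checked \emph{directly} against \emph{getMintData} rather than via a zk-SNARK, so the constructed transaction passes verification exactly when the bridge record is intact, which is always the case by step one.

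Third I would chain this to execution. By the liveness of the L1 ledger and the assumption of at least one honest operator, the client's mint is eventually placed into a published batch. The fraud-proof catalogue then ensures it survives: Rule~\ref{rule:tx_mint_values} forbids mints whose commitment disagrees with the bridge record, Rules~\ref{rule:tx_double_past} and~\ref{rule:tx_double_same} forbid reuse of a consumed nonce, and the batch-level rules~\ref{rule:b_ctr_checkpoint} and \ref{rule:b_ntr_checkpoint} forbid tampering with the Coin/Nullifier-tree checkpoints into which the mint is appended. Any deviation would be disputed by the honest verifier within the fraud-proof period, triggering \emph{revertL2Chain} and slashing. Conversely, the correct batch produced by the honest operator cannot be successfully disputed (by \emph{computational soundness} of zk-SNARKs and the \emph{collision resistance} of $h$), and hence becomes finalized, initializing the L2 state.

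The main obstacle I anticipate is ruling out indefinite censorship by a Byzantine operator, who could keep publishing otherwise-valid batches that simply omit a particular client's mint. This is strictly a liveness concern rather than a correctness one, so I would either (i) explicitly invoke the multi-operator assumption so that some honest operator will eventually service the request, or (ii) defer this sub-case to the separate \emph{liveness} theorem, keeping the present lemma focused on the correctness guarantee that \emph{whenever} a mint is produced for a given deposit it is the unique valid one and cannot be invalidated.
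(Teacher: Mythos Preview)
Your argument is sound, but the decomposition differs from the paper's. The paper proceeds by contradiction: it assumes the joining correspondence fails and enumerates the only three ways this can happen---(i) a mint appears on L2 with no matching \textsc{LOCK} on L1, (ii) a single \textsc{LOCK} is claimed by two mints, (iii) a legitimate mint is successfully disputed by a malicious verifier---and ties each case to a specific fraud-proof rule (\ref{rule:tx_mint_values} for (i), \ref{rule:tx_double_past}/\ref{rule:tx_double_same} for (ii), and the impossibility of a sound challenge under ideal $L_1$ and hash assumptions for (iii)). Your chain (a)--(c) instead runs \emph{forward}: the bridge record exists and is immutable, a unique mint is constructible from it, and that mint survives the fraud-proof window. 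The paper's case split makes the ``mint without deposit'' direction and the ``malicious verifier invalidates a good mint'' direction explicit as separate failure modes, which your presentation only covers implicitly through the rules you cite in step~(c); conversely, your forward argument is closer to the lemma's literal phrasing and folds in eventual inclusion, which the paper deliberately leaves to the separate \emph{liveness} theorem. Your closing paragraph already identifies this overlap correctly; for tight alignment with the paper you would drop the ``eventually included'' part of~(c) and recast the remainder as the three-case violation analysis.
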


\begin{proof}
Assume the lemma does not hold. Then the violation must occur in one of the following three cases:

\begin{enumerate}
\item A mint transaction occurs on $L_2$ without a corresponding $LOCK$ event on $L_1$. In such a case, 
This misbehavior can be challenged by $V$ using rule~\ref{rule:tx_mint_values}, which checks the revealed output values of the mint transaction against the claimed bridge nonce. Since the bridge also records the block number at which the nonce was created, the Judge can verify the "happened earlier" relation.

\item A $LOCK$ event is claimed twice with two different mint transactions. In such a case, as mint nonces are treated as nullifiers, rules~\ref{rule:tx_double_past} and~\ref{rule:tx_double_same} prevent a mint nonce from being used more than once in mint transactions.

\item A mint transaction occurs on $L_2$ with a corresponding $LOCK$ event on $L_1$, but is challenged and invalidated by a malicious $V$.
The malicious $V$ can only challenge through rules~\ref{rule:tx_mint_values}, \ref{rule:tx_double_past}, and \ref{rule:tx_double_same}.
If $V$ is able to successfully challenge through rule~\ref{rule:tx_mint_values}, then the $LOCK$ event triggered by a committed deposit transaction would be reverted, which contradicts the assumption of an ideal underlying L1 blockchain.
If $V$ is able to successfully challenge through rules~\ref{rule:tx_double_past} or~\ref{rule:tx_double_same}, then the adversary must be able to create a hash collision without knowing the secret input of the honest joining client, which contradicts the assumption of an ideal hash function.
\end{enumerate}

Conclusively, in \ru the correctness of protocol joining cannot be broken by the adversary.
\end{proof}

\paragraph{L2 transfer transaction execution.}According to \ru as introduced in~\ref{sec:design}, a regular L2 transfer transaction is considered executed only after it is published on the L1 blockchain and no fraud-proof is submitted by a verifier within a specified time window. We state the security property of L2 transfer transaction execution in \ru with the following lemma:

\begin{lemma}[Correct L2 transfer transaction execution]
    No honest participants will lose coins during the execution of L2 transfer transactions in \ru. 
\end{lemma}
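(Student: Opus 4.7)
The plan is to mirror the structure used for the \emph{Correct joining} lemma: assume an honest participant $U$ loses coins during a transfer execution, and perform a case analysis on \emph{how} the loss can occur, showing that each case contradicts either the ideal L1 ledger, the soundness of zk-SNARKs, the collision resistance of $h$, the EUF-CMA security of signatures, or the assumption that at least one honest verifier exists and responds within $fpp$. The two complementary directions to cover are (i) a malicious operator/client causes a transaction to finalize that reduces $U$'s entitled balance, and (ii) a malicious verifier reverts a batch whose execution preserved $U$'s entitled balance, via a spurious fraud-proof.

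For direction (i), I would enumerate the semantic ways $U$ can lose coins in a finalized batch: (a) a coin belonging to $U$ (i.e., one whose commitment $cm$ was created with $pk_{auth}=h(sk_{coin}\|pk_{sig})$ for $sk_{coin}$ known only to $U$) is consumed in some input $I_k$ without $U$'s authorization; (b) $U$'s legitimate output is omitted, malformed, or assigned a different owner/value/token than what $U$ signed off on inside its bracket; (c) the checkpointed Coin/Nullifier tree roots are advanced inconsistently so that $U$'s coin is silently invalidated or double-counted; or (d) the fee-collecting transaction siphons value from $U$'s transactions. Case (a) splits further: either the adversary forged an input proof for $U$'s $cm$, which contradicts soundness of Circuit~1 or collision resistance of $h$ in the derivation of $pk_{auth}$ and $sn$; or the adversary reused $U$'s serial number across brackets, which is caught by rule~\ref{rule:tx_double_past} or rule~\ref{rule:tx_double_same}; or the adversary used a stale/forged $crt$, which is caught by rule~\ref{rule:tx_input_ctr}; or the bracket's signature set did not actually include $U$, which is caught by rules~\ref{rule:txb_sig_count} and~\ref{rule:txb_sig_error} (and forging the signature itself contradicts EUF-CMA). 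Case (b) follows from soundness of Circuit~2 together with the transaction-hash/bracket-hash bindings (rules~\ref{rule:tx_hash}, \ref{rule:txb_hash}), which force every finalized transaction to satisfy balance and fee consistency and to be the one the inputs' owners actually signed. Case (c) is covered by the intermediate-root rules~\ref{rule:b_ctr_checkpoint} and~\ref{rule:b_ntr_checkpoint}, and case (d) by rules~\ref{rule:b_fee_checkpoint}, \ref{rule:b_coinbase_last}, \ref{rule:b_coinbase_other}, \ref{rule:b_coinbase_invalid}. In every subcase, since the honest verifier exists, reads the BLOB, and acts within $fpp$, the batch would be reverted via \texttt{disputeBlock}, contradicting the hypothesis that the loss was finalized.

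For direction (ii), a malicious verifier invoking \texttt{disputeBlock} on an honest batch must produce BLOB words $\mathit{data}$ that (1) are consistent with the BLOB commitment stored in the Inbox, and (2) trigger one of the rules. Point (1) relies on the binding of the KZG/BLOB commitment, and point (2) requires the verifier to either forge a zk-SNARK proof (against soundness of Circuits~1 or~2), produce a hash collision (to fake a conflicting $sn$, $cm$, $crt$, or bracket/transaction hash), or forge a signature on the bracket hash, each of which breaks an assumption. Hence no valid dispute exists against an honest batch, so an honest participant's finalized state cannot be rolled back.

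The main obstacle I expect is case (a), specifically ruling out \emph{unauthorised spending} of $U$'s coin. The subtlety is that the input's commitment $cm$ is hidden; what actually binds $U$ is the derivation $pk_{auth}=h(sk_{coin}\|pk_{sig})$ inside Circuit~1 together with the requirement that $pk_{sig}$ sign the bracket hash. I therefore need a careful argument that any accepted input proof, by soundness of Circuit~1, exhibits a witness $sk_{coin}$ whose image $pk_{coin}=h(sk_{coin})$ matches the one $U$ used when the coin was created; absent a hash collision or signature forgery, this witness must be $U$'s own secret, and then the bracket signature requirement forces $U$'s consent. Once this linkage is pinned down, the remaining rules combine to give the desired conclusion, and the lemma follows by contradiction.
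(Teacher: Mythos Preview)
Your proposal is correct but takes a substantially different route from the paper. The paper's own proof of this lemma is three sentences long: it observes that in the UTXO model the only way an honest client can lose a coin during a transfer is for that coin to be spent without the client's permission, and argues that unauthorised spending requires either forging a signature or extracting the hidden secret through the zk-SNARK proof, both of which contradict the idealised-primitive assumptions. That is essentially your case~(a), and nothing else.

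The remaining machinery you marshal --- double-spend detection via rules~\ref{rule:tx_double_past}/\ref{rule:tx_double_same}, balance preservation via Circuit~2 and rule~\ref{rule:tx_proof_zk-SNARK}, tree-root consistency via rules~\ref{rule:b_ctr_checkpoint}/\ref{rule:b_ntr_checkpoint}, and fee-accounting rules --- the paper does invoke, but in a \emph{separate} lemma (``Correct token value change''), which is stated and proved immediately after this one and is concerned with the global invariant that the total deposited amount cannot be altered. Likewise, your direction~(ii), that a malicious verifier cannot revert an honest batch, is not treated here at all; the paper folds that style of argument into the \emph{Correct joining} lemma (case~3) and into the liveness theorem. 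So the paper deliberately keeps this lemma narrow --- ``my coin cannot be spent without my key'' --- and defers value-conservation and non-revertibility to companion lemmas. Your decomposition packs all of these concerns into one place, which makes the single lemma self-contained but duplicates work the paper does elsewhere. Your identification of the crux (binding $pk_{auth}=h(sk_{coin}\|pk_{sig})$ inside Circuit~1 to the bracket signature) is exactly the point the paper's terse proof is gesturing at, and you articulate it more carefully than the paper does.
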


\begin{proof}
    We prove the lemma by enumerating the possible violations and showing that each contradicts our underlying assumptions.
    
    Since \ru adopts the UTXO model for L2 transactions, the only way an honest client could lose a coin is if the coin were spent without the client’s permission. For this to happen, either the adversary would need to forge a valid signature or learn the hidden secret from the zk-SNARK proof. Both cases contradict our assumptions of an ideal signature scheme and an ideal zk-SNARK scheme.
\end{proof}

\paragraph{Protocol leaving.}As the final procedure, our protocol must guarantee that each honest client can leave the protocol with its corresponding L2 state. The adversary may attempt to influence this procedure in two ways. First, the adversary could try to disrupt the execution of a client’s leave request. Second, the adversary could attempt to execute invalid transactions, thereby affecting the token storage in the \ru contract and indirectly impacting the success of an honest client’s departure. In the following, we analyze these two situations separately.

For the first possible influence, we have the following Lemma:

\begin{lemma}[Correct execution of leaving request]
    In \ru, the leaving request from the honest client will not be invalidated once published on the L1 blockchain.
\end{lemma}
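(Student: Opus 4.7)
The plan is to show that an honest client's burn transaction, once correctly included in a batch posted on L1, cannot be successfully challenged and removed via the fraud-proof mechanism. Since the Inbox contract only allows \texttt{revertL2Chain} to be invoked by the Judge contract after a successful dispute, it suffices to enumerate every fraud-proof rule from Section~\ref{sec:design} that could conceivably apply to a burn transaction (or to the batch that contains it) and show that none of them can be triggered against an honestly constructed burn. I would first fix notation for the honest client's burn transaction---its committed input $cm$, zk-SNARK proof $\pi$, fee field, and the associated burn-data tuple $(t, v, g, id_{L1})$ passed to the Inbox contract---so that the proof can refer to each field when checking individual rules.

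The case analysis then splits naturally into three groups. The commitment- and hash-based rules (e.g.\ rules~\ref{rule:tx_burn_values} and~\ref{rule:tx_hash}) only succeed if a verifier can recompute a different commitment or hash value from the revealed parameters, which an honest client prevents by construction; a successful challenge would require a hash collision, contradicting the ideal hash assumption. The zk-SNARK rule~\ref{rule:tx_input_zk-SNARK} is defeated by the completeness of the zk-SNARK scheme: the honest client's input proof $\pi$ verifies against the public parameters, so the Judge cannot deem it invalid. The double-spending and coin-tree rules (rules~\ref{rule:tx_double_past}, \ref{rule:tx_double_same}, and~\ref{rule:tx_input_ctr}) require either a previously registered serial number or a mismatch with the published coin tree root; an honest client uses a fresh coin it owns and references the actual root included by the operator, so no verifier can produce the required evidence without breaking the underlying hash or signature assumptions.

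The subtler part involves the batch-level rules, especially rule~\ref{rule:b_burn_copy} and rule~\ref{rule:b_burn_nonexist}, which relate the burn transaction inside the BLOB to the $(t,v,g,id_{L1})$ tuple stored in the Inbox. If the operator that publishes the batch is honest, these entries are consistent with the client's burn transaction by construction, and no dispute succeeds. If the publishing operator is malicious and deliberately writes inconsistent auxiliary data, then a verifier may indeed revert the batch via rule~\ref{rule:b_burn_copy}; however, this does not invalidate the honest client's \emph{request}---the burn transaction is simply re-offered and, by the Liveness theorem together with the Byzantine-model assumption of at least one honest operator, it is eventually re-published in a well-formed batch that withstands the fraud-proof window. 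I would close by observing that, under the strict reading of ``invalidated once published,'' only the rule-by-rule analysis is needed, while the extended reading additionally invokes liveness.

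The main obstacle I expect is precisely this distinction between ``the batch containing the burn is reverted'' and ``the client's leaving request is permanently lost.'' Making that argument rigorous requires carefully pinning down what it means for a burn to be ``published'' (is it the first attempt, or any attempt in a batch that survives the fraud-proof period?) and then invoking the Liveness theorem to guarantee that an honest operator eventually produces a batch that no verifier can successfully challenge under the rules above.
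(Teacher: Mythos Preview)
Your proposal is correct and, in fact, more thorough than the paper's own argument. The paper's proof is considerably more compressed: it argues only three points. First, if the adversary (as operator) publishes a tampered burn, rules~\ref{rule:tx_burn_values} and~\ref{rule:b_burn_copy} let an honest verifier catch it. Second, if the adversary corrupts a verifier to challenge a \emph{valid} burn, the paper asserts that only rules~\ref{rule:b_burn_copy} and~\ref{rule:b_burn_nonexist} are available and that both checks pass unless the consistency of the ideal L1 ledger is broken. Third, without a correct burn in the BLOB together with matching contract-stored data, no one can retrieve coins.

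The main methodological difference is in the second point. The paper simply restricts attention to the two burn-specific batch rules and reduces their failure to a violation of L1 consistency; it never explicitly justifies why a malicious verifier could not also try the commitment, hash, zk-SNARK, double-spend, or signature rules against the honest burn. Your full rule-by-rule enumeration fills that gap, and your reductions (to hash collision resistance for rules~\ref{rule:tx_burn_values} and~\ref{rule:tx_hash}, to zk-SNARK completeness for rule~\ref{rule:tx_input_zk-SNARK}, etc.) are the right ones. In effect you are proving a slightly stronger statement with a finer-grained case analysis, at the cost of more bookkeeping.

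Your treatment of the malicious-operator case is also more explicit than the paper's. The paper stops at ``the honest verifier detects and challenges it,'' leaving implicit what happens to the client's request afterward; you close that loop by invoking the Liveness theorem and the existence of an honest operator. The ambiguity you flag about the phrase ``invalidated once published'' is real, and the paper resolves it tacitly by reading ``published'' as ``published in a batch whose burn data and BLOB are mutually consistent.'' Your two-reading discussion is a reasonable way to make that precise; for a faithful reconstruction of the paper's intended argument, the strict reading (survival of a correctly formed batch through the fraud-proof window) is the one to emphasize, with the liveness appeal as a supplementary remark.
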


\begin{proof}
    First, the adversary could attempt to disrupt the procedure by publishing an invalid burn transaction. However, a burn transaction must include a revealed input that is checked by rule~\ref{rule:tx_burn_values}. In addition, the L1 beneficiary address is verified by rule~\ref{rule:b_burn_copy}. Therefore, as long as an honest verifier exists, any such violation will be detected and challenged.
    
    Second, the adversary could attempt to influence the procedure by corrupting the verifier and trying to invalidate a valid burn transaction with a fraudulent challenge. In this case, the verifier can only invoke rules~\ref{rule:b_burn_copy} and~\ref{rule:b_burn_nonexist}. Once the burn transaction has been published on the L1 blockchain, the checks in these rules will succeed unless the consistency of the underlying L1 blockchain is broken, which contradicts our assumption.

    Finally, without a correct burn transaction included in the BLOB and the corresponding information, such as the beneficiary address, committed to the smart contract, no party is able to retrieve coins from the contract.
\end{proof}

For the second possible influence, we have the following Lemma:

\begin{lemma}[Correct token value change]
In \ru, the adversary cannot alter the total deposited token amount through invalid transactions.
\end{lemma}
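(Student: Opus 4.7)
The plan is to prove the lemma by decomposing the adversary's possible influence on the total deposited token amount into the three transaction types that can affect it—mint, transfer, and burn—and showing that for each, any violation of token conservation is caught by a concrete fraud-proof rule. Since transfer transactions preserve the total by construction, while mint transactions are the only source of inflow and burn transactions the only source of outflow, it suffices to argue conservation for each case in turn, assuming an honest verifier exists.

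First I would handle mint transactions. A mint transaction increases the L2 token supply, so the adversary could try to mint without a backing L1 deposit, mint more than the deposit value, or reuse an existing deposit nonce. Each of these attempts is captured by rule~\ref{rule:tx_mint_values} (which verifies the mint commitment against the revealed $(token, value)$ and the bridge-recorded nonce data) and by rules~\ref{rule:tx_double_past} and~\ref{rule:tx_double_same} (which treat mint nonces as nullifiers and forbid their double use). Any deviation is therefore challengeable, so the total minted value is bounded exactly by the sum of genuine L1 deposits. Next I would handle transfer transactions. The zk-SNARK transaction circuit (Algorithm~\ref{alg:zk-SNARK_tx}) enforces that input values, output values, and fees sum to zero within a single token type, and the range-decomposition constraint on $value$ and $fee$ prevents field-wraparound overflow. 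Any published transfer whose proof does not verify, or whose public commitments do not match the claimed structure, is caught by rules~\ref{rule:tx_input_zk-SNARK}, \ref{rule:tx_proof_zk-SNARK}, and \ref{rule:tx_hash}. Combined with the double-spending rules, this ensures transfers strictly preserve the per-token-type balance on L2.

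For burn transactions, the adversary could attempt two things: create a burn transaction whose revealed $(token, value)$ does not correspond to a real L2 coin, or inject burn-withdraw data into the Inbox that is inconsistent with the burn transaction actually stored in the BLOB. The first is caught by rule~\ref{rule:tx_burn_values} together with the input proof of Algorithm~\ref{alg:zk-SNARK_coin_input}, which ties the burned amount to a coin that previously existed in the coin tree and has not been nullified. The second is caught by rules~\ref{rule:b_burn_copy} and~\ref{rule:b_burn_nonexist}, which enforce that the burn-data tuple $(t,v,g,id_{L1})$ stored in the contract exactly matches a burn transaction in the published BLOB at the claimed position. Therefore the total withdrawn amount through the \texttt{retrieve} function of Algorithm~\ref{alg:bridge_l2_l1} is bounded above by the sum of genuinely burned L2 coins.

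Combining these three parts, the total deposited token amount in the \ru contract changes only through genuine mints (matching honest L1 deposits) and genuine burns (matching valid L2 coin destructions), with transfers preserving the L2 supply exactly. The main obstacle I expect is making the per-token-type accounting airtight: I would need to argue carefully that the fee token (encoded as $token=0$ in Algorithm~\ref{alg:zk-SNARK_tx}) and the regular ERC-20 tokens are tracked independently, so that the transaction circuit's balance check truly implies conservation for every token type separately, and that fee-collecting transactions (governed by rules~\ref{rule:b_coinbase_last}, \ref{rule:b_coinbase_other}, \ref{rule:b_fee_checkpoint}, \ref{rule:b_coinbase_invalid}) only redistribute fees already included in deposits rather than inflate the total. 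Once that bookkeeping is laid out, the lemma follows from the soundness of the underlying zk-SNARK and hash function together with the assumed honest verifier.
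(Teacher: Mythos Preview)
Your proposal is correct and covers the necessary ground, but the paper organizes the argument differently. Whereas you decompose by \emph{transaction type} (mint, transfer, burn, fee-collecting) and then argue conservation for each, the paper decomposes by \emph{attack vector}: (i) double-spending, handled by rule~\ref{rule:tx_input_zk-SNARK} forcing the correct nullifier to be revealed and rules~\ref{rule:tx_double_past}/\ref{rule:tx_double_same} forbidding reuse; (ii) input/output imbalance, handled by rule~\ref{rule:tx_proof_zk-SNARK} together with rule~\ref{rule:tx_input_zk-SNARK} to tie each input to a genuine prior output; and (iii) replaying a mint nonce or a burn withdrawal, handled respectively by treating mint nonces as nullifiers and by the bridge contract clearing the burn reference once claimed. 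Your route is somewhat more systematic and explicitly surfaces points the paper leaves implicit---the range-decomposition check against field overflow, the burn-data consistency rules~\ref{rule:b_burn_copy}/\ref{rule:b_burn_nonexist}, and the separate fee-token bookkeeping---at the cost of being longer. The paper's route is terser and maps each attack directly to the minimal set of fraud-proof rules that defeat it, but it does not spell out the fee-accounting or overflow details you flag as the main obstacle. Either decomposition suffices once one appeals to zk-SNARK soundness and the existence of an honest verifier.
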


\begin{proof}
    The adversary could attempt to change the total deposited token amount in the following ways: (1) double-spending tokens; (2) creating transactions where the output value does not match the input value; (3) generating multiple mint or burn transactions.

For the first case, rule~\ref{rule:tx_input_zk-SNARK} ensures that the input zk-SNARK proof is correct. A correct input zk-SNARK proof guarantees that the deterministic serial number (nullifier) is revealed for each input. All nullifiers ever revealed will appear in the nullifier tree committed to by all \textit{following} transaction brackets. Thus, rule~\ref{rule:tx_double_past} prevents the reuse of nullifiers across different transaction brackets, while rule~\ref{rule:tx_double_same} prevents double spending within the same bracket.

For the second case, rule~\ref{rule:tx_proof_zk-SNARK} ensures that every zk-SNARK proof in transfer transactions is correct, and a correct zk-SNARK proof guarantees that each transfer transaction balances. Furthermore, every input to a transfer transaction must be correctly derived from a coin output, which is also validated by the input zk-SNARK proof, whose correctness is guaranteed by rule~\ref{rule:tx_input_zk-SNARK}. Therefore, as long as there exists one honest verifier, any imbalance can be detected and disputed.

For the third case, if a $LOCK$ event is claimed twice, mint nonces are treated as nullifiers, and rules~\ref{rule:tx_double_past} and~\ref{rule:tx_double_same} prevent the reuse of mint nonces in multiple mint transactions. If a nullifier is revealed in a transaction bracket $txb$, then it will appear in all nullifier trees of later brackets $txb'$ with $txb < txb'$. Similarly, if a burn transaction were to be claimed twice, this would fail since the bridge contract on L1 clears the reference to that specific burn transaction once claimed. Therefore, as long as at least one honest verifier exists, such violations will be detected and successfully disputed.
\end{proof}

Based on the conclusions above, we can have the following result:

\Thmbalance*

\subsection{Liveness}

Since the execution of a transaction in \ru requires two steps: the operator publishing transaction batches on the L1 blockchain and the subsequent verification by a verifier, we prove that liveness is guaranteed in \ru under the following assumptions:

\Thmliveness*

\begin{proof}

First, as long as there exists at least one honest operator in \ru, the transactions of honest clients will eventually be published on the L1 blockchain once they are delivered to the operator under our asynchronous communication model. A round-robin operator election can further ensure that the honest operator has the opportunity to publish batches on-chain. Any batch published by a dishonest operator will be disputed by the honest verifier and therefore will not affect the execution of transactions from honest clients.
   
   Then, the only way for the adversary to influence liveness by corrupting the verifier is to provide a valid fraud-proof $\pi$ against correct behavior from honest clients and honest operators. After a deposit transaction, transaction batch, or checkpoint has been committed on the L1 blockchain, if the verifier were able to generate a proof of non-existence of such actions, it would require reverting the underlying L1 blockchain, which contradicts the assumption of an ideal blockchain. Another possible way to construct a fraud-proof would be to either create a collision in the hash function, compute a preimage of a hash value or an encrypted message, or generate a valid zk-SNARK proof for an invalid input. All of these contradict our assumption of ideal cryptographic tools. Thus, a verifier cannot influence liveness.
\end{proof}

\subsection{Privacy}
\label{apx:privacy}

To prove the privacy of \ru, we need to prove that there exists no PPT adversary $\mathcal{A}$ that can win the L2-IND game with non-negligible probability. 

\subsubsection{Game and oracle definition}

First of all, the oracles used in the game are defined as follows:

\textbf{Ledger Oracle:} A ledger oracle $\mathcal{L}_1$ allows for two types of requests:
\begin{itemize}
    \item $WriteOnL1(tx)$ takes a valid transaction $tx$ and append to the transaction list it maintained.

    \item $ReadOnL1()$ returns all the committed L1 blockchain transaction list and states.
\end{itemize}

Furthermore, since the transaction list of \ru is also published on the L1 blockchain, here we similarly define a L2 ledger oracle $\mathcal{L}_2$ that is connected to $\mathcal{L}_1$ and also takes two types of requests:
\begin{itemize}
    \item $CommitBatch(batch)$ takes an input batch and appends to the L1 transaction list first, and then to the L2 transaction list if no fraud-proof response is received by interacting with the verifier oracle $\mathcal{V}$.

    \item $ReadOnL2()$ returns all the committed L2 transaction list and states.
\end{itemize}

\textbf{Client Oracle:} A client oracle $\mathcal{H}$ needs access to both the L1 and the L2 ledger oracle and models the behaviour of clients and allows for the following queries:
\begin{itemize}
    \item $CreateAddress(pp)$ creates a new L1 address $id_{L1}$ (generated from $pk_{sig}$) and corresponding key pair $pk=(pk_{sig},pk_{enc},pk_{coin})$ and $sk=(sk_{sig},sk_{enc},sk_{coin})$ based on public parameter $pp$. The mapping of the L1 address and the key pair is stored in the oracle. Noted that the key pair can also be generated separately.

    
    \item $PegIn(id_{L1}, a)$ takes the L1 address $id_{L1}$, asset $a$, the encoded contract address $B$ and prepare a pegin transaction $tx_{pegin}$ and send to $\mathcal{L}_1$ through request $WriteOnL1(tx_{pegin})$. Then, it creates a fresh nonce $n$ and records the mapping. And finally returns $(n, tx_{pegin})$.
    
    \item $Mint(id_{L1}, a, n)$ checks on the L1 oracle through $ReadOnL1()$ request whether $n$ is a bridge nonce that was awarded for locking $a$ assets and checks that no mint transaction for that bridge nonce was created by this oracle.
    If a check fails, it returns $\bot$.
    Then, it calls $CreateAddress$ to create a new coin keypair $(pk_{coin}, sk_{coin})$. Based on $(pp, n, a, pk_{coin})$, use the hash and commitment scheme, generate and return a new mint transaction $tx_{mint}$ containing coin $c$.
    Coins created by this oracle are added to the coins array $COINS$, and their index in this array is denoted by $cid(c)$.
    Finally, creates a transaction bracket $txb$ for mint transaction and returns $(cid(c), txb)$.
    
    \item $RevealValues(cid,sk)$ takes a coin index $cid$, and checks whether $sk$ proves the ownership of the coin with the decryption and commitment scheme. If check passes, the oracle returns the private values $t, v, g$ of $COINS[cid]$. Otherwise, it returns the random value attached to the coin's transaction.
    
    \item $Transfer$ takes a list of coin commitments $(cid_0, cid_1, ..., cid_k)$ and a list of intended outputs $((pk_0, t_0, v_0, g_0), (pk_1, t_1, v_1, g_1), ..., (pk_n, t_n, v_n, g_n))$.
    If any of the referenced coins do not exist through $ReadOnL2()$ check, a transfer or burn transaction has already been created for a coin, or the query cannot be fulfilled with a correct transfer transaction, the oracle returns $\bot$.
    It attempts to create a transfer transaction $tx$ that spends all input coins and creates the requested outputs.
    All output coins created are added to the $COINS$ array and their indices are returned alongside $txb$, which is a transaction bracket that contains $tx$.

    \item $Burn(cid,id_{L1},sk)$ takes a coin index $cid$ and a L1 address $id_{L1}$.
    If a coin $c$ exists at this position for which no transfer or burn transaction has yet been created, a bracket containing a burn transaction that burns this coin and specifies $id_{L1}$ as beneficiary address is created and returned.
    
    \item $PegOut(cid,id_{L1})$ takes a coin index $cid$ and a L1 address.
    It creates and returns a L1 transaction that claims these assets if $cid$ is a coin for which a burn transaction is already placed on L2, which is checked through request $ReadOnL2()$.
    Only one such transaction can be created for the same burn transaction.
\end{itemize}

Note that this oracle can also be queried with a list of mint, transfer or burn transactions, in which case a single transaction bracket is returned that contains all created transactions.
If any transaction fails to be created, then no bracket is returned at all.
Further note that $\mathcal{H}$ will automatically run the coin reception algorithm $Receive$ with all its maintained identities. 

\textbf{Operator Oracle:} An operator oracle $\mathcal{S}$ needs access to both the L1 and the L2 ledger oracle and models the behaviour of a live, honest operator that can be queried to create batches from transaction brackets.

It allows for the following queries:
\begin{itemize}
    \item $AddBracket(txb)$ takes a bracket from clients and attempts to add it to its mempool by updating $M' = M + txb$, and checks that the resulting blob serialization does not exceed the maximum BLOB size.
    If it doesn't exceed the size, then it sets $M := M'$, i.e. it updates its mempool.
    Otherwise, it creates a new L1 identity $(pk, sk)$ using $CreateAddress(pp)$, takes the brackets from its mempool $M$, and creates a new valid batch $b$. 
    This batch is then sent to $\mathcal{L}_2$ through $CommitBatch$ request, and the mempool is reset: $M := ()$.
\end{itemize}

\textbf{Verifier Oracle:} The verifier oracle $\mathcal{V}$ takes the following request:

\begin{itemize}
    \item $CommitBatch(batch)$ takes an appended $batch$ and will return a corresponding fraud-proof if there exists any invalid transaction there, otherwise outputs nothing.
\end{itemize}

Then we can formally define the L2-IND game as follows:

\begin{definition}(L2-IND game)
    \label{def:lind}
    The L2-IND game is a round-based game played between a challenger $\mathcal{C}$ and an adversary $\mathcal{A}$ that proceeds as follows: First is the setup phase, 
    $\mathcal{C}$ publicly creates $pp \gets Setup(1^\eta)$ with security parameter $\eta$, and then it creates the following:
    \begin{itemize}
        \item Two L1  ledger oracles: $L_{1;0}$ and $L_{1;1}$.
        \item Two L2 ledger oracles: $L_{2;0}$ and $L_{2;1}$.
        \item Two client oracles: $\mathcal{H}_0$ and $\mathcal{H}_1$.
        \item Two operator oracles: $\mathcal{S}_0$ and $\mathcal{S}_1$.
        \item Two verifier oracles: $\mathcal{V}_0$ and $\mathcal{V}_1$.
    \end{itemize}
    
    Furthermore, $\mathcal{C}$ privately samples a random bit $b \gets_\$ \{0,1\}$. Then in each round, $\mathcal{A}$ sends to $\mathcal{C}$ a pair of the same type of queries $(Q_0, Q_1)$. $\mathcal{C}$ first checks if $Q_0$ and $Q_1$ are publicly consistent, which is defined in Definition \ref{def:lind_pc}. If they are not, then $\mathcal{C}$ rejects this query and asks for the next one. Else, depending on the type, the following happens:
    \begin{itemize}
        \item If the query type is $CommitBatch$, then
        $\mathcal{C}$ forwards $Q_0$ to $L_{2;0}$ and $Q_1$ to $L_{2;1}$.
        Then, $\mathcal{C}$ asks $\mathcal{V}_0$ and $\mathcal{V}_1$ if the appended batches are correct.
        If at least one of them outputs a fraud-proof, $\mathcal{C}$ outputs $0$ (indicating that the Adversary lost the game)
        \item If the query type is $WriteOnL1$, $PegIn$ and $PegOut$, then $\mathcal{C}$ forwards $Q_0$ to $L_{1;0}$ and $Q_1$ to $L_{1;1}$.
        If either ledger rejects the transaction, $\mathcal{C}$ outputs $0$.
        \item The Adversary may end the game by sending a guess $b_\mathcal{A}$.
        $\mathcal{C}$ outputs $1$ iff $b_\mathcal{A} = b$, else $0$.
        \item Otherwise, the query is a query to the client oracle. $\mathcal{C}$ forwards $Q_0$ to $\mathcal{H}_{b}$, which results in some information output like a newly created public key and newly created coin indices, an L1 transaction $tx_{L1}$, or an L2 transaction $txb$.
        $\mathcal{C}$ forwards any L1 transaction to $L_{1;b}$.
        $\mathcal{C}$ forwards any L2 transaction bracket $txb$ to $\mathcal{S}_b$.
        Any information output is returned to $\mathcal{A}$.

        $\mathcal{C}$ performs the same procedure with $Q_1$, but instead of forwarding it to $\mathcal{H}_{b}$, it forwards it to the other honest client oracle $\mathcal{H}_{1-b}$.
        Any L1 transaction is forwarded to $L_{1;1-b}$, any L2 transaction bracket to $\mathcal{S}_{1-b}$.
    \end{itemize}
    The Adversary may also read the state of any ledger oracle at any time. Finally $\mathcal{A}$ sends the $guess$ request with $b_{\mathcal{A}}$ to $\mathcal{C}$ and is considered to win the game if $b_{\mathcal{A}} = b$ with non-negligible probability greater than $\frac{1}{2}$.
\end{definition}

\begin{definition}(Public Consistency)
    \label{def:lind_pc}
    Given two queries $Q_0, Q_1$ of the same type 
    we call them \textbf{publicly consistent} according to the following rules:
    \begin{enumerate}
        \item \label{enu:lind_pc_ca} All $CreateAddress$ queries are publicly consistent, but we require that both oracles create the same address.
        \item \label{enu:lind_pc_lmbu} If the type is $WriteOnL1, CommitBatch,PegIn_{L1}, Mint,\allowbreak Burn$ or $PegOut_{L1}$, these queries are publicly consistent if and only if all public values are equal in both queries.
        \item If the type is $Transfer$, the following rules determine if the queries are publicly consistent:
        \begin{enumerate}
            \item \label{enu:lind_pc_io} The size and total transaction value of both queries must be equal, i.e., the number of inputs and outputs of $Q_0$ must match the number of inputs and outputs of $Q_1$.

            \item \label{enu:lind_pc_acc} Both queries $Q_0, Q_1$ have the same results in the validity checking conducted by the oracle. 
            \item \label{enu:lind_pc_pub} The public information (e.g., the fee) on both queries must be equal.
            \item Both queries must also be equivalent with respect to the information available to $\mathcal{A}$, which means
            \begin{enumerate}
                \item \label{enu:lind_pc_known} For every output of $Q_0$ that specifies a recipient address which is controlled by the adversary (i.e., it was not created by $CreateAddress$), then the assets of this output must match in both queries. 
                The same applies to $Q_1$, respectively.
                \item \label{enu:lind_pc_insert} If an input $i_j$ of $Q_0$ references (in the ledger $L_{2;b}$) an output contained in a transaction that was created by the Adversary (i.e., it was appended to the ledger by means of a $CommitBatch$ query), then the corresponding input $i'_j$ in $Q_1$ must reference (in the ledger $L_{2;1-b}$) a coin commitment that also appears in a transaction posted via a $CommitBatch$ query.
                Furthermore, the assets of these inputs $i_j$ and $i'_j$ must be equal in both queries.
                The same applies to $Q_1$, respectively.
            \end{enumerate}
        \end{enumerate}
        Recall that the adversary is allowed to send a list of the above queries that request L2 transactions, in which case all resulting transactions are packed inside a single bracket.
        In this case, these lists of queries are publicly consistent iff the number of queries is the same in both lists, and every corresponding pair of queries is publicly consistent.
    \end{enumerate}
\end{definition}

Finally, we can prove the \emph{L2 transaction privacy} of \ru by conducting the L2-IND experiment.

\Thmprivacy*

\begin{proof}
    We assume that there is an adversary $A$ running in probabilistic polynomial time that has a non-negligible advantage $p_A$ in the L2-IND game.
    Using a series of modifications for which we prove are at most distinguishable with negligible probability from the real L2-IND game we arrive at a simulation $\mathcal{G}_S$.
    No oracle answer from the simulation $\mathcal{G}_S$ depends on the bit $b$, thus the advantage any adversary $A'$ can have in $\mathcal{G}_S$ is precisely 0.
    This results in a contradiction, thus completing the proof that no adversary $A$ can exist.

    The behaviour of the verifier $\mathcal{C}$ in the simulation $\mathcal{G}_S$ differs from its behaviour in the L2-IND game as follows:
    When running the $Setup$ algorithm to create the public parameters $pp$, $\mathcal{C}$ stores the trapdoor information $trap_{input}, trap_{tx}$ for the input and transaction zk-SNARK proof schemes.
    $\mathcal{C}$ will use that information to be able to provide zk-SNARK proofs for any witness/statement pair $(x, w)$ that will be accepted by $\mathcal{V}_0$ and $\mathcal{V}_1$, regardless whether $(x, w)$ are in the relation or not.
    When $\mathcal{C}$ receives a publicly consistent query pair $Q_0, Q_1$ that it would normally forward to $\mathcal{H}_b, \mathcal{H}_{1-b}$, it instead forwards it to $\mathcal{M}_{0}, \mathcal{M}_{1}$. 
    Such an oracle $\mathcal{M}$ behaves as follows:
    \begin{itemize}
        \item If the query type is $CreateAddress$, it runs $(pk, sk) \gets CreateAddress(pp)$ to obtain an address.
        Recall that $pk = (pk_{sig},pk_{enc},pk_{coin})$, where $pk_{coin}$ is the public key used in the coin commitment and $pk_{enc}$ is the public encryption key used to send private coin data to the receiver. 
        Then, it randomly samples $pk_{coin}'$ from the same distribution from which the coin public keys are sampled, and sets $pk' = (pk_{coin}', pk_{enc})$.
        It stores $sk$ and returns $pk'$.
        \item If the query type is $Transfer$, the way it calculates the inputs for the resulting transfer transaction differs from the real algorithm $Transfer$ in the following way:
        \begin{itemize}
            \item Randomly sample a serial number $sn'$ that is equal in length to real serial numbers and use $sn'$ instead of the real $sn$.
            \item Randomly sample a public key authorizer $pk_{auth}'$ that is equal in length to real public key authorizers and use $pk_{auth}'$ instead of the real $pk_{auth}$. 
            \item The remaining values are constructed as defined in the $Transfer$ algorithm.
            \item The resulting input $I'$ generated from $sn'$, $ pk'_{auth}$ is not a valid input, thus an honest zk-SNARK prover cannot create a valid input proof for $I'$.
            However, since $\mathcal{M}$ is given access to $trap_{input}$, it is able to forge a valid zk-SNARK proof $\pi_{I}'$ that will be accepted by the Verifier.
        \end{itemize}
        Also, the way $\mathcal{M}$ calculates the outputs for the resulting transfer transaction differs from the real $Transfer$ algorithm: 
        \begin{itemize}
            \item If the query requests that the coin is sent to an address $pk =: (pk_{sig}, pk_{enc},pk_{coin})$ maintained by $\mathcal{M}$, a random coin commitment value $cm'$ is generated.
            Furthermore, the encrypted extra data $d_{enc}$ is obtained by encrypting random data of the correct length under $pk_{enc}$.
            \item If the query requests that the coin is sent to an address not maintained by $\mathcal{H}$ (i.e., it is under control of the Adversary), $\mathcal{M}$ correctly constructs the output as specified in the $Transfer$ algorithm.
            \item The remaining values are constructed as defined in the $Transfer$ algorithm.
        \end{itemize}
        Clearly, the resulting transaction $tx'$ generated from $\mathcal{H}$ is not a valid transaction.
        However, since $\mathcal{M}$ is given access to $trap_{tx}$, it is able to forge a valid zk-SNARK proof $\pi$ that will be accepted by the Verifier.
    \end{itemize}
    Specifically note that in this simulation, no query response given to the $\mathcal{A}$ depends on bit $b$.
    Thus, any advantage the Adversary can have in this simulation is $0$. We proceed in showing that the Adversary can distinguish such a simulation $S$ from a real execution of the game at most with negligible advantage by providing a series of modifications. Denote with $\mathcal{G}_{real}$ the real game, and $\mathcal{G}_{S}$ the simulation $S$.
    \paragraph{From $\mathcal{G}_{real}$ to $\mathcal{G}_1$.}This modification consists of faking all zk-SNARK proofs using the trapdoor information obtained from the $Setup$ algorithm.
    This modification does not result in a different distribution of the zk-SNARK proofs, since we assumed that the zk-SNARK scheme used is zero-knowledge.

    \paragraph{From $\mathcal{G}_1$ to $\mathcal{G}_2$.}This modification consists of replacing the encrypted data in outputs that are maintained by an honest client with a pure random value.
    The adversary does not have access to the honest client's secret key, and since by our assumption on the encryption scheme used, the advantage that the adversary has in distinguishing ciphertexts from random is at most negligible.

    \paragraph{From $\mathcal{G}_2$ to $\mathcal{G}_3$.}This modification consists of replacing hash digests (coin public keys, serial numbers, public key authorizers) with random values, as defined in the above definition of $\mathcal{G}_S$.
    Since we assume that our hash function is a PRF, the advantage that the adversary has in distinguishing hash digests from random values is at most negligible.

    \paragraph{From $\mathcal{G}_3$ to $\mathcal{G}_S$.}This modification consists of replacing some coin commitments with random values, as defined in the above definition of $\mathcal{G}_S$. 
    Note that coin outputs already no longer depend on the bit $b$, since any two outputs created in a pair of mint queries must commit to the same public values in order for the queries to be publicly consistent, to the random coin public key (introduced by $\mathcal{G}_3$) and random coin identifier.
    Since our construction uses the hash function as a hiding commitment scheme (by including random values not known to the adversary in the pre-image), the advantage that the adversary has in distinguishing commitments from random values is at most negligible.
\end{proof}

\end{document}